\providecommand{\algorithmname}{Algorithm}
\newtheorem{thm}{\protect\theoremname}
\newtheorem{defn}[thm]{\protect\definitionname}
\newtheorem{lem}[thm]{\protect\lemmaname}
\newtheorem{prop}[thm]{\protect\propositionname}
\providecommand{\definitionname}{Definition}
\providecommand{\lemmaname}{Lemma}
\providecommand{\propositionname}{Proposition}
\providecommand{\theoremname}{Theorem}
\begin{document}

\title{Efficient Approximation Algorithms for Computing \emph{k} Disjoint Restricted
Shortest Paths}

\numberofauthors{4} 
%
\author{
\alignauthor
Longkun Guo \\
       \affaddr{College of Mathematics and Computer Science}\\
       \affaddr{Fuzhou University}\\
       \affaddr{China}\\
       \email{lkguo@fzu.edu.cn}
\alignauthor
Kewen Liao \\
       \affaddr{School of Computer Science}\\
       \affaddr{University of Adelaide}\\
       \affaddr{Australia}\\
       \email{kevin.liao@gmail.com}
\alignauthor Hong Shen \\
       \affaddr{School of Computer Science}\\
       \affaddr{ University of Adelaide}\\
       \affaddr{Australia}\\
       \email{hong.shen@cs.adelaide.edu.au}
\and  
\alignauthor Peng Li\\
       \affaddr{Dept of Computer Science and Engineering}\\
       \affaddr{Washington University in St. Louis}\\
       \affaddr{U.S.A.}\\
       \email{pengli@wustl.edu}
}
\maketitle
\begin{abstract}
Network applications, such as multimedia streaming and video conferencing,
impose growing requirements over Quality of Service (QoS), including
bandwidth, delay, jitter, etc. Meanwhile, networks are expected to
be load-balanced, energy-efficient, and resilient to some degree of
failures. It is observed that the above requirements could be better
met with multiple disjoint QoS paths than a single one. Let $G=(V,\, E)$
be a digraph with nonnegative integral cost and delay on every edge,
$s,\, t\in V$ be two specified vertices, and $D\in\mathbb{Z}_{0}^{+}$
be a delay bound (or some other constraint), the \emph{$k$ Disjoint
Restricted Shortest Path} ($k$\emph{RSP})\emph{ Problem} is computing
$k$ disjoint paths between $s$ and $t$ with total cost minimized
and total delay bounded by $D$. Few efficient algorithms have been
developed because of the hardness of the problem.

In this paper, we propose efficient algorithms with provable performance
guarantees for the $k$RSP problem. We first present a pseudo-polynomial-time
approximation algorithm with a bifactor approximation ratio of $(1,\,2)$,
then improve the algorithm to polynomial time with a bifactor ratio
of $(1+\epsilon,\,2+\epsilon)$ for any fixed $\epsilon>0$, which
is better than the current best approximation ratio $(O(1+\gamma),\, O(1+\frac{1}{\gamma})\})$
for any fixed $\gamma>0$ \cite{orda2004efficient}. To the best of
our knowledge, this is the first constant-factor algorithm that almost
strictly obeys the constraint for the $k$RSP problem. \end{abstract}

\keywords{
$k$ disjoint restricted shortest path, bifactor approximation algorithm,
auxiliary graph, cycle cancellation.}

\section{Introduction}

\subsection{Background}
The disjoint quality of service (QoS) path problem is a generalization
of the shortest QoS path problem and has broad applications in networking,
data transmission, etc. In data networks, many applications, such
as video streaming, video conferencing, and on-demand video delivery,
have several QoS requirements, which require the routing between source
and destination nodes to simultaneously satisfy several QoS constraints,
such as bandwidth and delay. In those applications, a single link
might not provide adequate bandwidth, and multiple disjoint QoS paths
are often necessary. Given cost and delay as QoS constraints, the
$k$\emph{ disjoint QoS path problem} can be defined as follows.
\begin{defn}
(\emph{The} \emph{$k$ disjoint QoS path problem}) Given a digraph
$G=(V,\, E)$, a pair of distinct vertices $s,\, t\in V$, a cost
function $c:\, E\rightarrow\mathbb{Z}_{0}^{+}$, a delay function
$d:\, E\rightarrow\mathbb{Z}_{0}^{+}$, and a given delay bound $D\in\mathbb{Z}_{0}^{+}$,
the\emph{ $k$ disjoint QoS path problem} is to compute $k$ disjoint
$st$-paths $P_{1},\dots,P_{k}$, i.e., $E(P_{i})\cap E(P_{j})=\emptyset$
for every $i\neq j\in\{1,\,\dots,\, k\}$, such that $d(P_{i})\leq D$
for each $i=1,\dots,k$, and the total cost of the $k$ disjoint paths
is minimized.
\end{defn}
This problem is \emph{NP}-hard even when all edges of $G$ has a cost
of zero \textcolor{black}{\cite{li1989cft}}. The hardness result
indicates that it is impossible to develop exact or  polynomial-time approximation
algorithms that strictly obey the delay constraint for the\emph{ }$k$
disjoint QoS paths problem unless \emph{P=NP}. An alternative method
is to compute $k$ disjoint paths with total cost minimized
and a total delay bounded by $D$ (equal to $kD$ in Definition 1),
and then route the packages via the $k$ paths according to their
urgency priority, i.e., routing urgent packages via paths of low delay
whilst deferrable ones via paths of high delay. Then, \emph{the }$k$
\emph{disjoint Restricted Shortest Path} ($k$RSP) problem arises
as in the following:
\begin{defn}
(\emph{The} \emph{$k$ disjoint Restricted Shortest Path problem},
$k$RSP) Given a digraph $G=(V,\, E)$, a pair of distinct vertices
$s,\, t\in V$, a cost function $c:\, E\rightarrow\mathbb{Z}_{0}^{+}$,
a delay function $d:\, E\rightarrow\mathbb{Z}_{0}^{+}$, and a delay
bound $D\in\mathbb{Z}_{0}^{+}$, the\emph{ $k$} (edge) disjoint Restricted
Shortest Paths (\emph{k}RSP)\emph{ }problem is to calculate $k$ disjoint
$st$-paths $P_{1},\,\dots,\, P_{k}$, such that $E(P_{i})\cap E(P_{j})=\emptyset$
 for any $i\neq j\in\{1,\,\dots,\, k\}$, $\underset{i=1,\dots,k}{\sum}d(P_{i})\leq D$,
and the cost of the $k$ paths is minimized.
\end{defn}
Previous works on the $k$RSP problem are mainly bifactor approximation
algorithms. An algorithm ${\cal A}$ is a bifactor $\left(\alpha,\,\beta\right)$-approximation
algorithm for the $k$RSP problem if and only if for every instance
of $k$RSP\emph{, }${\cal A}$ runs in polynomial time and outputs
$k$ disjoint $st$-paths, and the total delay and the total cost
of the computed $k$ disjoint paths are bounded by $\alpha D$ and
$\beta C_{OPT}$, respectively, where $C_{OPT}$ is the cost of an optimal solution to $k$RSP, $\alpha$ and $\beta$ are positive
constants. Note that a single factor $\beta$-approximation is identical
to a bifactor $\left(1,\,\beta\right)$-approximation, and we will
use them interchangeably in the rest of the paper.

The $k$RSP problem is a fundamental problem in multipath routing that has a variety of benefits including fault tolerance,
increased bandwidth, load balance and etc. Although multipath routing is not widely
deployed in practice because of the difficulty of developing efficient
algorithms, research interest focus on multipath routing is growing. The reason is  the development of software defined networking
(SDN),  which seems a wonderful place to deploy multipath routing. Controllers
therein have global information of the network and stronger computational
ability, which make it possible to deploy some complicated routing
algorithm into the network. Our results show that $k$RSP admits efficient
approximation algorithm theoretically. This may help to boost the
development of multipath routing in SDN.

$k$RSP has theoretical values beyond multipath routing. It is an interesting
bicriteria optimization problem, because the shortest path problem
is a fundamental problem in the area of combinatorial optimization. In addition, our method on $k$RSP
might be applied to other related problems.

\subsection{Related Work}

In general,\textcolor{black}{{} $k$RSP is a budgeted optimization problem:
given the maximum delay constraint, find $k$ disjoint paths of minimum
cost, which has been well-investigated recently. Efficient algorithms
have been developed for budgeted matching and budgeted matroid intersection
\cite{berger2011budgeted,chekuri2011multi}.} However, those methods
can not be adopted to solve $k$RSP because $k$RSP cannot be modeled
as matching or \textcolor{black}{matroid intersection. Another budgeted
network design problem, }the\emph{ shallow-light Steiner tree (SLST)
}problem\textcolor{black}{, is }attracting\textcolor{black}{{} lots
of research interest.} \emph{SLST} is to compute a minimum cost tree
spanning a set of given terminals, such that the cost of the computed
tree is minimized and the delay from a specified terminal to every
other terminal is not larger than $D$. This problem can not be approximated
better than factor $(1,\,\gamma\log^{2}n)$ for some fixed $\gamma>0$
unless $NP\subseteq DTIME(n^{\log\log n})$ \cite{khandekar2013some}.
Further, although the inapproximability result doesn't exclude the
existence of polylogarithmic factor approximation algorithms for \emph{SLST},
to the best of our knowledge, no such algorithms within polynomial
time complexity have been developed. The algorithm with the best ratio
is a long standing result by Charikar et al, which is a polylogarithmic
approximation algorithm that runs in quasi-polynomial time, i.e.,
a factor-$O(\log^{2}t)$ approximation algorithm within time complexity
$n^{O(\log t)}$\cite{charikar1998approximation}. Due to the difficulty
in designing single factor approximation algorithms, bifactor approximation
algorithms have been investigated. Hajiaghayi et al presented an $(O(\log^{2}t),\, O(\log^{4}t))$-approximation
algorithm that runs in polynomial time \cite{hajiaghayi2006approximating}.
Besides, Kapoor and Sarwat designed an approximation algorithm with
bifactor $(O(\frac{p\log t}{\log p}),O(\frac{\log t}{\log p}))$,
where $p$ is an input parameter \cite{kapoor2007bounded}. Their
algorithm is an approximation algorithm that improves the cost of
the tree, and is with bifactor $(O(t),\, O(1))$ when $p=t$ \cite{kapoor2007bounded}.
It is even more interesting that,  for the special case $S=V$, Charikar
et al's ratio of $O(\log^{2}n)$ \cite{charikar1998approximation}
(with quasi-polynomial time complexity $n^{O(\log n)}$) is still
the best single factor ratio. In addition, the existing bifactor approximations
for \emph{SLST} are also the known currently best approximations for
the special case.

Special cases of $k$RSP have also been studied. When the delay constraint
is removed, this problem is reduced to the min-sum disjoint path problem
of calculating $k$ disjoint paths with the total cost minimized.
This problem is known polynomially solvable \cite{suurballe1974dpn}.
Moreover, when $k=1$, the problem reduces to the single restricted
shortest path (RSP) problem, which is known as a basic QoS routing
problem \cite{garey1979computers} and admits fully polynomial-time
approximation scheme (FPTAS) \cite{garey1979computers,lorenz2001simple}.
The (single) QoS path problem with multiple constraints is still attracting
considerable research interests. Xue et al. recently proposed a ($1+\epsilon$)-approximation
algorithm \cite{xue2008polynomial}. \textcolor{black}{W}hen\textcolor{black}{{}
the delay constraint is on each single path, and the cost on every
edge is zero, the $k$RSP problem reduces to the length-bounded disjoint
path problem of finding two disjoint paths with the length of each
path constrained by a given bound. This problem is a variant of the
Min-Max problem, which is to find two disjoint paths with the length
of the longer one minimized. Both problems are known to be }\textcolor{black}{\emph{NP-}}\textcolor{black}{complete
\cite{li1989cft} with the best possible approximation ratio of 2
in digraphs \cite{li1989cft}, which can be achieved by applying the
algorithm for the min-sum problem in \cite{suurballe1974dpn,suurballe1984qmf}.
In contrast, the min-min problem of finding two disjoint paths with
the length of the shorter path minimized is also NP-complete but doesn't
admit $k$-approximation for any $k\geq1$ \cite{bhatia2006finding,algorithmicaGuoS13,xu2006caa}.
The problem remains NP-complete and admits no polynomial-time approximation
scheme in planar digraphs \cite{guo2012tcs}. }

A closely related problem, the $k$ disjoint bi-constrained path problem
($k$BCP), targets $k$ disjoint $st$-paths that satisfy the given
cost constraint $\underset{i=1,\dots,k}{\sum}c(P_{i})\leq C$ and
the given delay constraint $\underset{i=1,\dots,k}{\sum}d(P_{i})\leq D$.
For the $k$BCP problem, approximation algorithms with a bifactor
approximation ratio of $(1+\beta,\,\max\{2,\,1+\ln\frac{1}{\beta})$
or a single factor ratio of $O(\ln n)$ (i.e. bifactor approximation
ratio $(1,\, O(\ln n))$) have been developed for general $k$ in
\cite{GuoSL13}, where $\beta>0$ is any fixed positive real number.
Apparently, $k$BCP is a weaker version of $k$RSP, and hence all
approximations of $k$RSP can be adopted to solve $k$BCP, but not
the other way around.

The $k$RSP problem itself has attracted considerable research interest of
a number of computer scientists. \cite{chao2007new} and \cite{orda2004efficient}
have achieved bifactor ratios of $(1+\frac{1}{r},\, r(1+\frac{2(\log r+1)}{r})(1+\epsilon))$
and $(1+\frac{1}{r},\,1+r)$ for $k=2$, respectively. Based on LP-rounding
technology, an approximation with bifactor ratio $(2,\,2)$ has been
developed in \cite{DBLP:conf/faw/Guo14}. To the best of our knowledge,
however, no algorithm has achieved a constant single factor approximation
ratio for decades. Our algorithm is the first one showing that $k$RSP
admits a almost constant single factor approximation ratio theoretically.

\subsection{Our Results}

The main results of this paper are summarized in this section with
proofs detailed in latter sections.
\begin{lem}
\label{lem:mainre}The $k$RSP problem admits an approximation algorithm
with an approximation ratio of $(1,\,2)$ and runtime $O((\sum c(e))^{2}\sum d(e)(n^{4.5}C_{OPT}^{4.5}DL))$,
where $C$ is the cost of an optimal solution and $L$ is the maximum
length of the input.
\end{lem}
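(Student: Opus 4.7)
\medskip
\noindent\textbf{Proof Proposal.}
The plan is to design a pseudo-polynomial algorithm in two stages: first compute a minimum-cost (but possibly delay-infeasible) collection of $k$ disjoint $st$-paths, and then repeatedly swap pieces of this solution against an implicit optimum via a cycle-cancellation procedure on an auxiliary graph until the total delay is reduced to at most $D$. Concretely, I would first guess the target cost $C_{OPT}$ by enumerating the integer values in $\{0,1,\dots,\sum_{e}c(e)\}$, which already explains the leading $\sum c(e)$ factor in the stated runtime; for each guess, I would run Suurballe's min-sum disjoint paths algorithm to obtain $P_{\min}$, the minimum-cost set of $k$ edge-disjoint $st$-paths, noting that $c(P_{\min})\le C_{OPT}$ by optimality. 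If $d(P_{\min})\le D$ the algorithm halts and returns $P_{\min}$, so the interesting case is $d(P_{\min})>D$.

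The second stage is cycle cancellation. The key structural fact is that the symmetric difference $P_{\min}\triangle OPT$, where $OPT$ is any fixed optimal solution, decomposes into a collection of alternating cycles; each such cycle $C$ has a well-defined cost change $\Delta c(C)\ge 0$ (positive because $P_{\min}$ is cost-minimum) and a delay change $\Delta d(C)$ that may be negative. I would encode the candidate swaps by building an auxiliary graph $H$ whose nodes represent the current $k$-path configuration and whose arcs are indexed by integer cost/delay offsets (contributing the $\sum c(e)$ and $\sum d(e)$ factors to the runtime); a shortest-path or min-cost flow computation on $H$ yields an optimal choice of cycles to apply. Applying cycles to $P_{\min}$ must preserve edge-disjointness of the $k$ $st$-paths, so the auxiliary graph has to enforce this combinatorially, which I would do by working with the residual-like graph used in Suurballe's algorithm so that alternating cycles automatically correspond to valid reroutings. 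This residual graph together with the pseudo-polynomial offset indexing gives the $n^{4.5}C_{OPT}^{4.5}DL$ term via a standard min-cost flow / weighted matching solver.

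For the approximation analysis, I would argue as follows. Let $Q$ be the final returned solution. Consider applying, one by one, the cycles in $P_{\min}\triangle OPT$ in an order that strictly decreases total delay; starting from $P_{\min}$ with $d(P_{\min})>D$ and ending at $OPT$ with $d(OPT)\le D$, there is a first moment at which the delay drops to at most $D$. The cost at that moment is at most $c(P_{\min})$ plus the cost changes applied, which is upper-bounded by $c(P_{\min})+c(OPT)\le 2C_{OPT}$, and the cancellation routine minimizes cost among all such delay-feasible configurations, so $c(Q)\le 2C_{OPT}$ and $d(Q)\le D$, giving the $(1,2)$ bifactor. The main obstacle I anticipate is Step~2: designing the auxiliary graph so that (i) its min-cost flow exactly corresponds to a delay-feasible reassembly of $k$ edge-disjoint $st$-paths rather than to an arbitrary union of cycles that might violate disjointness, and (ii) its size is controlled by the pseudo-polynomial parameters appearing in the claimed runtime. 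Verifying that the cycles in $P_{\min}\triangle OPT$ can always be applied one-by-one while keeping all intermediate configurations edge-disjoint is the delicate combinatorial lemma on which the whole approximation bound rests.
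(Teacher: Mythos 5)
Your high-level architecture --- start from a cost-bounded set of $k$ disjoint paths, then repeatedly cancel cycles of the residual graph until the total delay drops to $D$ --- is the same as the paper's (the paper seeds with the $(2,2)$ LP-rounding solution rather than Suurballe, but that difference is cosmetic). The genuine gap is in the approximation-ratio argument. Your analysis reasons about the cycles of $P_{\min}\oplus OPT$: since $P_{\min}$ is cost-minimum, each such cycle has nonnegative cost change, so every intermediate configuration reached by applying a subset of them has cost at most $c(OPT)=C_{OPT}$ --- taken at face value your existence argument proves a $(1,1)$ bound, which is exactly the sign that it cannot be converted into an algorithm (the problem is NP-hard). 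The algorithm never sees $OPT$ and cannot identify those particular cycles; it can only search the residual graph for \emph{some} cycle, and that graph has edges carrying negative cost \emph{and} negative delay simultaneously, so even the best-ratio cycle is not known to be computable in polynomial time. Your fallback, that ``a min-cost flow computation on $H$ yields an optimal choice of cycles'' and that ``the cancellation routine minimizes cost among all delay-feasible configurations,'' would, if realizable, solve the problem exactly, so it cannot be the mechanism behind a factor-$2$ bound.

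What actually drives the factor $2$ in the paper is a definition you do not have: a \emph{bicameral} cycle is one whose ratio $d(O)/c(O)$ merely beats the current target $\Delta D/\Delta C$ \emph{and} whose cost satisfies $|c(O)|\le C_{OPT}$. The per-cycle cost cap is essential: an induction using the ratio test shows the running cost stays strictly below $C_{OPT}$ in every iteration before the last, and the final iteration can add at most $C_{OPT}$ more, giving $2C_{OPT}$; without the cap the guarantee degrades to $(1+\alpha,\,1+1/\alpha)$ with $\alpha$ arbitrarily small (the paper exhibits a counterexample). Two further pieces are missing from your proposal. First, termination: the paper bounds the number of iterations by showing the ratio $r_i=\Delta D_i/\Delta C_i$ is monotone nondecreasing and takes at most $\sum c(e)\cdot\sum d(e)$ distinct values, yielding the $D\cdot\sum c(e)\cdot\sum d(e)$ iteration count. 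Second, the cycle-finding subroutine: the paper's auxiliary graph is not a graph of ``configurations'' but, for each vertex $v$ and each cost budget $B$, a layered graph $H_v^{+}(B)$ (and $H_v^{-}(B)$) whose layers record accumulated cost, so that cycles through $v$ with cost in $[0,B]$ correspond to cycles in $H_v^{+}(B)$; an LP over these graphs extracts a bicameral cycle, and this is where the $n^{4.5}C_{OPT}^{4.5}L$ term and the second $\sum c(e)$ factor in the runtime come from. You correctly flagged the auxiliary-graph step as the delicate one, but the construction you sketch does not supply it, and the ratio argument as written does not establish the $(1,2)$ bound for what the algorithm actually computes.
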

We note that the algorithm is with pseudo-polynomial time complexity,
because the above formula of the time complexity involves $C_{OPT}$, $D$,
and edge costs and edge delays of the graph. However, by applying
the traditional technique for polynomial time approximation scheme
design as in \cite{garey1979computers}, we can immediately obtain
a polynomial time algorithm with bifactor approximation ratio ($1+\epsilon_{1},\,2+\epsilon_{2}$).
Below is the main idea of the technique: For any constant $\epsilon_{1},\,\epsilon_{2}>0$,
set the delay and cost of every edge $e$ to $\left\lfloor \frac{d(e)}{\frac{\epsilon_{1}D}{n}}\right\rfloor $
and $\left\lfloor \frac{c(e)}{\frac{\epsilon_{2}C_{OPT}}{n}}\right\rfloor $,
respectively. Then the time complexity of the algorithm becomes polynomial
time, since

\[O((\sum c(e))^{2}\sum d(e)(n^{4.5}C_{OPT}^{4.5}DL))\]
\[\leq O(\sum^{2}\left\lfloor \frac{c(e)}{\frac{\epsilon_{2}C_{OPT}}{n}}\right\rfloor \sum\left\lfloor \frac{d(e)}{\frac{\epsilon_{1}D}{n}}\right\rfloor (n^{4.5}(\frac{n}{\epsilon_{2}})^{4.5}\frac{n}{\epsilon_{1}}L))\]
\[\leq O(m^{3}n^{12}\frac{1}{\epsilon_{1}^{2}\epsilon_{2}^{6.5}})\].
\begin{thm}
For any constants $\epsilon_{1},\,\epsilon_{2}>0$, the $k$RSP problem
admits a polynomial time algorithm with an approximation ratio $(1+\epsilon_{1},\,2+\epsilon_{2})$.
\end{thm}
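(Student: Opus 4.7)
The plan is to apply the classical rounding-and-scaling technique of Garey and Johnson to the pseudo-polynomial $(1,2)$-approximation of Lemma~\ref{lem:mainre}, thereby reducing its runtime to polynomial while introducing only a small additive slack in the delay and cost guarantees. The key observation driving the construction is that the runtime bound in Lemma~\ref{lem:mainre} is already polynomial in $n$ once $D$, $C_{OPT}$, and the edge weights themselves are replaced by polynomially bounded integers, so suitable rounding of $d(e)$ and $c(e)$ should suffice.

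First I would deal with the fact that $C_{OPT}$ is unknown. Using the trivial bound $C_{OPT}\le kn\cdot c_{\max}$ with $c_{\max}=\max_e c(e)$, a geometric search over $O(\log(knc_{\max})/\log(1+\epsilon_2))$ candidate values $\widehat{C}\in\{(1+\epsilon_2)^i\}$ produces, in polynomial time, a guess satisfying $C_{OPT}\le \widehat{C}\le (1+\epsilon_2)C_{OPT}$. The remainder of the argument proceeds for each fixed guess, and the overall procedure returns the best feasible output.

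Next, for the chosen guess I scale the weights by $d'(e)=\lfloor d(e)\cdot n/(\epsilon_1 D)\rfloor$ and $c'(e)=\lfloor c(e)\cdot n/(\epsilon_2 \widehat{C})\rfloor$ and invoke the algorithm of Lemma~\ref{lem:mainre} on the scaled instance with delay bound $D'=\lfloor n/\epsilon_1\rfloor$. Since $\sum d'(e)$, $\sum c'(e)$, the scaled optimum, and $D'$ are each polynomially bounded in $n$ and $1/\epsilon_i$, substituting these magnitudes into the runtime bound of Lemma~\ref{lem:mainre} reproduces precisely the polynomial bound $O(m^{3}n^{12}/(\epsilon_1^{2}\epsilon_2^{6.5}))$ already spelled out in the excerpt.

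The main step, and the only place where care is required, is the approximation analysis. From the rounding inequalities $d'(e)\cdot\epsilon_1 D/n\le d(e)\le (d'(e)+1)\cdot\epsilon_1 D/n$ and the analogous cost inequalities, one concludes, for the $k$ disjoint paths returned on the scaled instance, that $\sum_i d(P_i)\le D+(\text{edges used})\cdot\epsilon_1 D/n$ and $\sum_i c(P_i)\le 2\widehat{C}+(\text{edges used})\cdot\epsilon_2 \widehat{C}/n$. The main obstacle here is controlling this accumulated additive error: the $k$ output paths together use up to $kn$ edges, so a naive bound gives an error proportional to $k\epsilon_1 D$ and $k\epsilon_2 C_{OPT}$. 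I would absorb the $k$ factor by tightening the scaling denominators (replacing $n$ by $kn$), or, equivalently, by pre-dividing the target $\epsilon_1,\epsilon_2$ by a constant multiple of $k\le n$ before scaling. This only multiplies the runtime by a polynomial factor and, combined with the $(1,2)$ guarantee of Lemma~\ref{lem:mainre} on the scaled instance and the $(1+\epsilon_2)$ slack from the geometric search for $\widehat{C}$, yields the claimed $(1+\epsilon_1,\,2+\epsilon_2)$ bifactor on the original instance.
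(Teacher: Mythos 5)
Your proposal is correct and follows essentially the same route as the paper: the standard Garey--Johnson scaling of $d(e)$ and $c(e)$ by $\epsilon_{1}D/n$ and $\epsilon_{2}C_{OPT}/n$, applied to the pseudo-polynomial $(1,\,2)$-algorithm of Lemma~\ref{lem:mainre}. You are in fact more careful than the paper, which only verifies that the runtime becomes polynomial and silently omits both the geometric search for the unknown $C_{OPT}$ and the additive rounding error accumulated over the at most $kn$ edges of the output; your fix of tightening the scaling denominators from $n$ to $kn$ handles the latter correctly.
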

The time complexity of the algorithm can be analyzed better. Anyhow,
to the best of our knowledge, this is the first constant factor approximation
algorithm for the $k$RSP problem that runs in polynomial time, and
almost strictly obeys the delay constraint. Note that we can set $\epsilon_{1}=\epsilon_{2}=\epsilon$,
and the approximation ratio will be $(1+\epsilon,\,2+\epsilon)$,
as we claimed in the abstract.

Our pseudo polynomial time approximation algorithm of Lemma \ref{lem:mainre}
mainly consists of two phases. The first phase is to compute a not-too-bad
solution for the $k$RSP problem by a simple LP-rounding algorithm
as in \cite{DBLP:conf/faw/Guo14}, whose performance guarantee has
been shown as in the following:
\begin{lem}
\label{lem:guofaw} The $k$RSP problem admits an algorithm, such
that for any of its output solutions, there exists a real number $0\leq\alpha\leq2$
such that the delay-sum and the cost-sum of the solution are bounded
by $\alpha D$ and $(2-\alpha)*C_{OPT}$, respectively.
\end{lem}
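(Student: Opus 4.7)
The plan is to obtain the bifactor guarantee via LP-rounding, following the approach of~\cite{DBLP:conf/faw/Guo14}. First I would write down the natural LP relaxation of $k$RSP: one variable $x_e\in[0,1]$ per edge, flow-conservation constraints requiring $k$ units to be shipped from $s$ to $t$, the delay-budget constraint $\sum_{e}d(e)\,x_e\le D$, and the objective $\min\sum_{e}c(e)\,x_e$. Since any integral feasible solution corresponds to $k$ disjoint $st$-paths satisfying the delay bound, the LP optimum $C_{LP}$ is a lower bound on $C_{OPT}$, and the LP can be solved in polynomial time.

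Next I would exploit the structure of extreme-point optima of this LP. The underlying $k$-flow polytope is integral, by total unimodularity of network-flow constraint matrices, so adjoining the single delay-budget inequality adds at most one ``fractional degree of freedom.'' Consequently an optimal basic solution $x^{*}$ splits as an integral partial $k$-flow $f$ (a collection of edge-disjoint $st$-paths) plus a fractional residue supported on a single cycle, or equivalently on a pair of $st$-walks, whose fractional values use up the slack in the delay constraint. Rounding that residue in two opposite ways yields two integral candidates, one that respects the LP delay but may duplicate the residue's cost, and one that respects the LP cost but may duplicate the residue's delay. Selecting whichever rounding the extreme point is closer to gives the algorithm's output.

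Finally I would carry out the bifactor analysis by defining $\alpha$ a posteriori via $\alpha D := d(P_{\mathrm{out}})$, where $P_{\mathrm{out}}$ denotes the returned $k$-path solution. The LP residue contributes at most $D$ extra delay and at most $C_{LP}\le C_{OPT}$ extra cost, and the two candidate roundings lie at the two endpoints of a straight exchange in normalized $(\text{delay}/D,\text{cost}/C_{OPT})$-space. Any intermediate choice picks up $\alpha D$ delay in return for saving a matching amount of cost, so $c(P_{\mathrm{out}})\le(2-\alpha)\,C_{OPT}$ with $0\le\alpha\le 2$, which is exactly the claimed guarantee.

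The main obstacle is justifying the linearity of the delay/cost exchange under combinatorial rounding. One must verify that the two integer roundings of the single fractional cycle/path-pair actually attain the extreme points $(\alpha,2-\alpha)=(0,2)$ and $(2,0)$ of the trade-off curve, rather than producing a strictly worse convex combination; this reduces to a case analysis of whether the fractional edges can be partitioned into two sub-collections of total delay at most $D$ each. The rest, including running time and feasibility of the resulting $k$ disjoint paths, follows by standard flow-decomposition arguments.
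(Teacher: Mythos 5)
The paper never proves this lemma itself: it imports it wholesale from the cited LP-rounding algorithm of~\cite{DBLP:conf/faw/Guo14}, so there is no in-paper argument to compare against, and your attempt to reconstruct the proof is the right exercise. Your skeleton is also the right one and matches that source: relax $k$RSP to a min-cost $k$-flow LP with capacities $x_e\le 1$ and a single delay-budget row, note that $C_{LP}\le C_{OPT}$, and observe that a basic optimum $x^{*}$ either is integral or lies on an edge of the integral $k$-flow polytope cut by the hyperplane $\sum_e d(e)x_e=D$, hence $x^{*}=\lambda y_1+(1-\lambda)y_2$ for two integral $k$-flows $y_1,y_2$ whose difference is supported on the single fractional cycle; the output is one of $y_1,y_2$ (minus any superfluous cycles, which only lowers cost and delay).

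The gap is in the final selection-and-analysis step, which is what actually produces the $(\alpha,\,2-\alpha)$ trade-off. Your rule ``output whichever rounding the extreme point is closer to'' (the endpoint with the larger $\lambda$) is not justified and is in fact wrong: nothing in your argument excludes the configuration $\lambda=0.9$, $d(y_1)=\tfrac{10}{9}D$, $d(y_2)=0$, $c(y_1)$ marginally below $C_{LP}$ and $c(y_2)$ marginally above, which is consistent with $d(x^{*})=D$ and $c(x^{*})=C_{LP}=C_{OPT}$; the ``closer'' endpoint $y_1$ then needs $\alpha\ge\tfrac{10}{9}$ for its delay but $\alpha\le 1$ for its cost, so no valid $\alpha$ exists for it. Likewise, the ``main obstacle'' you pose --- showing the two roundings attain the extreme points $(0,2)$ and $(2,0)$ --- is both false in general and unnecessary, and the proposed case analysis about partitioning the fractional edges into two sub-collections of delay at most $D$ is not the right reduction. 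The correct closing step is an averaging argument: since
\begin{equation*}
\lambda\Bigl(\tfrac{d(y_1)}{D}+\tfrac{c(y_1)}{C_{OPT}}\Bigr)+(1-\lambda)\Bigl(\tfrac{d(y_2)}{D}+\tfrac{c(y_2)}{C_{OPT}}\Bigr)=\tfrac{d(x^{*})}{D}+\tfrac{c(x^{*})}{C_{OPT}}\le 1+\tfrac{C_{LP}}{C_{OPT}}\le 2,
\end{equation*}
at least one endpoint has normalized delay plus normalized cost at most $2$; outputting the $y_i$ minimizing $\tfrac{d(y_i)}{D}+\tfrac{c(y_i)}{C_{LP}}$ (computable without knowing $C_{OPT}$, and an upper bound on the $C_{OPT}$-normalized sum) and setting $\alpha=d(y_i)/D\in[0,2]$ yields exactly the statement of the lemma. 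With that replacement your proof goes through.
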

Note that $\alpha$ differs for different instances, i.e., the algorithm
might return a solution with cost $2C_{OPT}$ and delay $0$ for some instances,
whilst a solution with cost 0 and delay $2D$ for other instances.
So the bifactor approximation ratio for the algorithm is actually
$(2,2)$. The second phase of our algorithm, as the main task of this
paper, is to improve the approximation ratio $(2,\,2)$ to $(1+\epsilon,\,2+\epsilon)$
in pseudo polynomial time.

The main techniques involved in the second phase include the cycle
cancellation method, LP-rounding, and some graph transformations techniques.
We would like to remark here that the cycle cancellation method is
a classic framework that has been used to solve numerous problems
related to shortest path, minimum cost flow and etc \cite{ahuja1993network,GuoSL13,orda2004efficient}.
However, to the best of our knowledge, in previous works the cycles
used are computed in a graph which allows either negative-cost edges
or negative-delay edges, but not both. Because no polynomial algorithm
is known for computing a \emph{best} cycle for cycle cancellation
in a graph allowing both negative cost and delay. This paper not only
gives an approximation algorithm for the $k$RSP problem, but also
enhances the cycle cancellation method by giving a novel algorithm
of computing bicameral cycle, which is a \emph{good-enough} cycle
for cycle cancellation in a graph where both negative-cost edges and
negative-delay edges are allowed (The formal definition of bicameral
cycles will be given later). That is, our algorithm can hopefully
improve the approximation ratios of other bicriteria optimization
problems (or namely budgeted optimization problems) in which the cycle
cancellation method can be suitably applied.

The remainder of the paper is organized as below: Section 2 introduces
cycle cancellation, Section 3 gives the definition of bicameral cycles,
and then shows that using bicameral cycles in cycle cancellation can
result in a desired approximation ratio $(1+\epsilon_{1},\,2+\epsilon_{2})$.
This is the first tricky task of this paper. Section 4 gives the algorithm
that actually computes a bicameral cycle in polynomial time, which
is the most tricky task of this paper. Section 5 concludes this paper.

\section{The cycle cancellation method for $k$RSP}

This section will give the key idea of an improved approximation for
$k$RSP based on the cycle cancellation method. We shall first state
our version of the cycle cancellation method (and point out the difference
comparing to previous versions), then give the improved algorithm
that is an enhancement to the cycle cancellation technique.

\subsection{The Cycle Cancellation Method}

We would like to start from some definitions and notations. Let $E_{1}$
and $E_{2}$ be two set of edges. Then $E_{1}\oplus E_{2}$ denotes
the edge set $E_{1}\cup E_{2}\setminus\{e(u,\, v)\vert\{e(u,\, v),\, e(v,\, u)\}\subseteq E_{1}\cup E_{2}\}$,
i.e., $E_{1}\cup E_{2}$ except pairs of parallel edges in opposite
direction therein. Let $P$ be a $st$-path and $\overline{P}=\{e'(v,u)\vert e(u,v)\in P\}$,
i.e., $\overline{P}$ is $P$ but with
the direction of each edge reversed. Moreover, the cost and delay of the edges of $\overline{P}$
are negatived, i.e., $c(e'(v,\, u))=-c(e(u,\, v))$ and $d(e'(v,\, u))=-d(e(u,\, v))$
for every edge $e'(v,\, u)\in\overline{P}$.
\begin{defn}
\label{def:Residual-graph}(Residual graph) A residual graph $\widetilde{G}=G_{res}(P_{1},\,\dots,\, P_{k})$,
with respect to $G$ and $P_{1},\,\dots,\, P_{k}\subset G$, is graph
$G\cup(\underset{i=1,\,\dots,\, k}{\cup}E(\overline{P_{i}}))\setminus\underset{i=1,\,\dots,\, k}{\cup}E(P_{i})$,
i.e., graph $G$ with the direction of edges of $P_{1},\,\dots,\, P_{k}$
reversed, and their cost and delay negatived. %
\footnote{Note that $\widetilde{G}$ can contain pairs of parallel edges in
the same direction with different costs and delays. Thus, $\widetilde{G}$
might be a multigraph. %
}
\end{defn}
For notation briefness, we use $\widetilde{G}$ instead of $G_{res}(P_{1},\,\dots,\, P_{k})$
while no confusion arises. In literature \cite{GuoSL13}
and \cite{orda2004efficient}, the authors also employed the cycle cancellation
method. However, in their constructed residual graph, the cost of
reversed edge $e'(v,u)$ is sat  $c(e'(v,u))=0$ instead of $c(e'(v,u))=-c(e(u,v))$,
such that the edges in their residual graphs are with nonnegative
cost. Then, the minimum-mean-cycle algorithm can be applied therein,
and hence a \emph{best} cycle for cycle cancellation, i.e., $O$ with
$\frac{d(O)}{c(O)}$ minimized, can be computed in polynomial time
\cite{korte2012combinatorial}. Note that for a more complicated residual
graph as in Definition \ref{def:Residual-graph}, the method above
can not be applied to compute a best cycle any longer.

The cycle cancellation method is based on the following proposition
which can be derived from the flow theory:
\begin{prop}
\label{prop:cyclever}Let $P_{1},\,\dots,\, P_{k}$ be $k$ disjoint
$st$-paths in $G$ and $O_{1},\,\dots,\, O_{h}$ be a set of edge-disjoint
cycles in residual graph $\widetilde{G}$. Then $\{P_{1},\,\dots,\, P_{k}\}\oplus\{O_{1},\,\dots,\, O_{h}\}$
contains $k$ disjoint $st$-paths in $G$. \end{prop}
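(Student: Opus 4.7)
The plan is to reduce the claim to the standard integer-flow-decomposition argument for augmenting cycles. Treating every edge of $G$ as a unit-capacity arc, I view the disjoint paths $P_1,\ldots,P_k$ as an integral $st$-flow $f$ of value $k$. Each cycle $O_j\subseteq\widetilde{G}$ then naturally represents a unit circulation on the residual network of $f$: an edge of $O_j$ lying in the original orientation of $G$ contributes $+1$ on that arc, while an edge of the form $e'(v,u)\in\overline{P_i}\cap O_j$ contributes $-1$ on the arc $e(u,v)\in P_i$.

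First I would formalize this by defining, for each $j$, the integer vector
\[
g_j(e)\;=\;\mathbf{1}\bigl[e\in O_j\bigr]\;-\;\mathbf{1}\bigl[e\in P_1\cup\cdots\cup P_k\text{ and }e'\in O_j\bigr],
\]
where $e'$ is the reversed copy of $e$. A routine check using the cycle property of $O_j$ in $\widetilde{G}$ shows that each $g_j$ is a circulation in $G$. Because $O_1,\ldots,O_h$ are edge-disjoint in the multigraph $\widetilde{G}$, no arc of $G$ is touched by two of them in either direction, so $g:=\sum_{j=1}^{h}g_j$ takes values in $\{-1,0,+1\}$.

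Next I set $f':=f+g$ and argue that $f'$ is a feasible integral unit-capacity $st$-flow of value $k$ in $G$. Conservation at every internal vertex is inherited from $f$ and from the fact that each $g_j$ is a circulation. Capacity is respected because a $-1$ contribution to an edge $e$ can occur only when some $O_j$ uses $e'$, which forces $e\in P_i\subseteq\mathrm{supp}(f)$, so $f'(e)\in\{0,1\}$ everywhere. Applying integer flow decomposition to $f'$ then yields $k$ edge-disjoint $st$-paths (plus possibly some edge-disjoint cycles in $G$, which we simply discard).

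Finally I verify that $\mathrm{supp}(f')$ coincides with $\{P_1,\ldots,P_k\}\oplus\{O_1,\ldots,O_h\}$ by unwinding the definition of $\oplus$: an edge $e(u,v)$ survives in $\mathrm{supp}(f')$ iff either $e\in\bigcup P_i$ and no $O_j$ uses $e'(v,u)$, or $e\notin\bigcup P_i$ and some $O_j$ uses $e$ in its original direction; both conditions are exactly the ones preserved by the $\oplus$-operator. Hence the $k$ extracted paths lie inside $\{P_1,\ldots,P_k\}\oplus\{O_1,\ldots,O_h\}$, as required. The main obstacle I expect is the bookkeeping in the second and third steps: since $\widetilde{G}$ is a multigraph that may carry several parallel arcs with different costs and delays, one must be careful to invoke edge-disjointness of the $O_j$'s in the \emph{multigraph} sense in order to guarantee that $f+g$ remains in $\{0,1\}$. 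Once that subtlety is handled, the rest follows from textbook flow arguments.
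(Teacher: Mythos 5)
Your proposal is correct and follows essentially the same route as the paper's proof: both view $P_1,\dots,P_k$ as an integral $st$-flow of value $k$, use the degree/conservation and unit-capacity arguments (relying on edge-disjointness of the cycles and on reversed arcs being cancelled against path edges) to show the $\oplus$-combination is a feasible integral flow of value $k$ in $G$, and then invoke flow decomposition. Your version merely makes the circulation vectors $g_j$ and the support identification explicit, which the paper leaves informal.
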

\begin{proof}
Following the flow theory, we need only to show that $\{P_{1},\,\dots,\, P_{k}\}\oplus\{O_{1},\,\dots,\, O_{h}\}$contains
an integral $st$-flow of value $k$ and the flow is a valid flow
in $G$.

Firstly, $\{P_{1},\,\dots,\, P_{k}\}\oplus\{O_{1},\,\dots,\, O_{h}\}$
contains an integral flow of value $k$ that goes from $s$ to $t$.
That is because $s$ and $t$ are respectively with degree $k$ and
$-k$ in $P_{1}\cup\,\dots\,\cup P_{k}$ whilst every vertex are with
degree 0 in $O_{1},\,\dots,\, O_{h}$. So $s$ and $t$ are respectively
with degree $k$ and $-k$ while every other vertex is with degree
0 in $\{P_{1},\,\dots,\, P_{k}\}\oplus\{O_{1},\,\dots,\, O_{h}\}$.

Secondly, $\{P_{1},\,\dots,\, P_{k}\}\oplus\{O_{1},\,\dots,\, O_{h}\}$
is a valid flow in $G$. On one hand, every edge appears at most once
in the flow of$\{P_{1},\,\dots,\, P_{k}\}\oplus\{O_{1},\,\dots,\, O_{h}\}$
(i.e. satisfies the capacity constraint). That is because every edge
appears at most once in $\{P_{1},\,\dots,\, P_{k}\}\oplus\{O_{1},\,\dots,\, O_{h}\}$,
since$\{P_{1},\,\dots,\, P_{k}\}$ and $\{O_{1},\,\dots,\, O_{h}\}$
shares no common edges, and the paths of $\{P_{1},\,\dots,\, P_{k}\}$
are mutually edge-disjoint and so are the cycles of $\{O_{1},\,\dots,\, O_{h}\}$.
On the other hand, $\{P_{1},\,\dots,\, P_{k}\}\oplus\{O_{1},\,\dots,\, O_{h}\}$
contains only edges of $G$, since every edge of $\{O_{1},\,\dots,\, O_{h}\}\setminus G$
has a parallel opposite counterpart in $\{P_{1},\,\dots,\, P_{k}\}$.
\end{proof}
Intuitively, $\{P_{1},\,\dots,\, P_{k}\}\oplus\{O_{1},\,\dots,\, O_{h}\}$
is to replace the edges of $P_{1},\,\dots,\, P_{k}$, which have parallel
opposite counterpart in $\{O_{1},\,\dots,\, O_{h}\}$, by the edges
of $\{O_{1},\,\dots,\, O_{h}\}\setminus\{\overline{P}_{1},\,$ $\dots,\,\overline{P_{k}}\}$.
Such an edge replacement with cycles is called cycle cancellation.

\subsection{Cycle Cancellation with Bicameral Cycles for $k$RSP}

This subsection shall apply the cycle cancellation method to improve
a solution of the $k$RSP problem. Let $P_{1},\,\dots,\, P_{k}$ be
a solution resulting from the first phase with large delay, say $\sum_{i=1}^{k}d(P_{i})>D$.
According to Proposition \ref{prop:cyclever}, if we could find a
set of edge-disjoint negative delay cycles $O_{1},\,\dots,\, O_{h}$,
then we could decrease the delay of the solution to $k$RSP by using
$\{P_{1},\,\dots,\, P_{k}\}\oplus\{O_{1},\,\dots,\, O_{h}\}$ as the
$k$ disjoint paths. Naturally, a question arises:

\emph{Does there always exist cycles with negative delay in $\widetilde{G}$
when $\sum_{i=1}^{k}d(P_{i})>D$ holds?}

To answer  this question, the following proposition is necessary:
\begin{prop}
\label{prop:cycle}Let $P_{1}^{*},\,\dots,\, P_{k}^{*}$ be a minimum
cost solution to the $k$RSP problem that satisfies the delay constraint,
and $P_{1},\,\dots,\, P_{k}$ be $k$ disjoint $st$-paths. Then $\{P_{1}^{*},\,\dots,\, P_{k}^{*}\}\oplus\{\overline{P_{1}},\,\dots,\,\overline{P_{k}}\}$
is exactly a set of edge-disjoint cycles.
\end{prop}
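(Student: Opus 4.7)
The plan is to give a purely combinatorial Euler-style balance argument; the optimality and delay-feasibility of $P_{1}^{*},\dots,P_{k}^{*}$ play no role in this proposition, and the claim is really just a topological fact about two collections of $k$ disjoint $st$-paths. First I would regard $H:=\bigcup_{i=1}^{k}E(P_{i}^{*})\cup\bigcup_{j=1}^{k}E(\overline{P_{j}})$ as a directed edge multiset and verify that the in-degree equals the out-degree at every vertex. At any internal vertex of a given path, the path contributes one in-edge and one out-edge, so its contribution is balanced; at the endpoints $s$ and $t$ only one side is nonzero. Aggregating, the $P_{i}^{*}$'s supply out-degree $k$ at $s$ and in-degree $k$ at $t$, while the reversed paths $\overline{P_{j}}$ supply in-degree $k$ at $s$ and out-degree $k$ at $t$. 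Hence $H$ is balanced at every vertex.

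Next I would observe that the operator $\oplus$ of Section~2.1, which deletes every antiparallel pair $\{e(u,v),e(v,u)\}$ that occurs in the multiset, preserves balance: each such deletion subtracts exactly one from both the in- and out-degree of $u$ and of $v$. Therefore $\{P_{1}^{*},\dots,P_{k}^{*}\}\oplus\{\overline{P_{1}},\dots,\overline{P_{k}}\}$ is still a balanced directed multigraph on $V$. The standard Eulerian decomposition argument — repeatedly start at any vertex of positive out-degree, follow edges until a cycle closes, remove it, and recurse — then partitions the remaining edge set into edge-disjoint directed cycles, which is exactly what the proposition asserts.

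The only delicate point, and the step most worth writing out carefully, is the multiset bookkeeping when an oriented edge $e(u,v)$ appears both inside some $P_{i}^{*}$ and inside some $\overline{P_{j}}$ (which happens precisely when $e(v,u)\in P_{j}$). These are two distinct occurrences that in $H$ sit antiparallel to each other; I would count their degree contributions separately so that $\oplus$ is seen to cancel matched antiparallel pairs rather than accidentally deleting a single shared edge. Once this multiset accounting is done cleanly, the balance verification and the subsequent Eulerian decomposition are both routine, and edge-disjointness of the resulting cycles is immediate from the decomposition procedure.
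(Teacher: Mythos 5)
Your proposal is correct and follows essentially the same route as the paper, which argues that every vertex of $\{P_{1}^{*},\dots,P_{k}^{*}\}\oplus\{\overline{P_{1}},\dots,\overline{P_{k}}\}$ has net degree zero and then invokes the standard decomposition of a balanced directed (multi)graph into edge-disjoint cycles, citing the network-flow literature for details. You simply write out the degree bookkeeping at $s$, $t$, and the internal vertices, the fact that deleting antiparallel pairs preserves balance, and the Eulerian decomposition step that the paper leaves implicit.
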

The key idea to prove the proposition is that: Every vertex in graph
$\{P_{1}^{*},\,\dots,\, P_{k}^{*}\}\oplus\{\overline{P_{1}},\,\dots,\,\overline{P_{k}}\}$
is with degree 0, so $\{P_{1}^{*},\,\dots,\, P_{k}^{*}\}\oplus\{\overline{P_{1}},\,\dots,\,\overline{P_{k}}\}$
is composed by exactly a set of cycles. We note that a generalized
version (for network flow) of Proposition \ref{prop:cycle} has appear
in \cite{ahuja1993network}, so we omit the detailed proof here. Let
$\{O_{1},\dots,O_{h}\}$ be the set of cycles of $\{P_{1}^{*},\,\dots,\, P_{k}^{*}\}\oplus\{\overline{P_{1}},\,\dots,\,\overline{P_{k}}\}$.
Note that $\sum_{i=1}^{k}d(P_{i})>\sum_{i=1}^{k}d(P_{i}^{*})$, so
$\sum_{i=1}^{h}d(O_{i})=\sum_{i=1}^{k}d(P_{i}^{*})-\sum_{i=1}^{k}d(P_{i})<0$
holds, and hence at least one cycle of $\{O_{1},\dots,O_{h}\}$ is
with negative delay. Hence, the previous question has a positive answer,
formally as below:
\begin{lem}
\label{lem:existenceofbc}If \emph{$\sum_{i=1}^{k}d(P_{i})>D$,} then
there exists at least one cycle with negative delay in $\widetilde{G}$\emph{.}
\end{lem}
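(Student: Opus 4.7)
The plan is to exhibit the required negative-delay cycle by contrasting the given solution $P_1,\dots,P_k$ with an optimal solution and invoking the decomposition Proposition~\ref{prop:cycle} already stated above. Since the $k$RSP instance has some optimal solution $P_1^{*},\dots,P_k^{*}$ satisfying $\sum_{i=1}^{k} d(P_i^{*}) \leq D$, and by hypothesis $\sum_{i=1}^{k} d(P_i) > D$, the delay ``deficit'' $\sum d(P_i^{*}) - \sum d(P_i)$ is strictly negative, and I want to localize this deficit inside a single cycle.

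First I would form the edge set $F = \{P_1^{*},\dots,P_k^{*}\} \oplus \{\overline{P_1},\dots,\overline{P_k}\}$. By Proposition~\ref{prop:cycle}, $F$ decomposes into edge-disjoint cycles $O_1,\dots,O_h$. The next step, which is a small but necessary verification, is to check that each $O_j$ actually lives in the residual graph $\widetilde{G}=G_{res}(P_1,\dots,P_k)$: every edge of $F$ is either an edge of some $P_i^{*}$ that does not get cancelled by any $\overline{P_j}$ (and therefore sits in $G\setminus \cup_i E(P_i)\subseteq \widetilde{G}$) or a reversed edge of some $\overline{P_j}$ that is not cancelled by $P_i^{*}$ (and therefore sits in $\cup_i E(\overline{P_i})\subseteq \widetilde{G}$). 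So $O_1,\dots,O_h$ are legitimate cycles of $\widetilde{G}$.

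Next I would compute the total delay of these cycles. Because edges of $\overline{P_i}$ carry negated delay by definition, and cancelled edge-pairs contribute zero, a direct accounting gives
\[
\sum_{j=1}^{h} d(O_j) \;=\; \sum_{i=1}^{k} d(P_i^{*}) \;-\; \sum_{i=1}^{k} d(P_i) \;<\; D - D \;=\; 0,
\]
where the strict inequality uses $\sum d(P_i^{*})\leq D < \sum d(P_i)$.

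Finally, an averaging argument concludes the proof: if every $O_j$ had $d(O_j)\geq 0$, the sum would be nonnegative, contradicting the displayed inequality. Hence at least one $O_j$ has $d(O_j)<0$, yielding the desired negative-delay cycle in $\widetilde{G}$. The only step that requires a bit of care is the edge-in-$\widetilde{G}$ verification; everything else follows mechanically from Proposition~\ref{prop:cycle} and the sign conventions in Definition~\ref{def:Residual-graph}.
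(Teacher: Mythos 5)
Your proposal is correct and follows essentially the same route as the paper: the paper likewise forms $\{P_{1}^{*},\dots,P_{k}^{*}\}\oplus\{\overline{P_{1}},\dots,\overline{P_{k}}\}$, invokes Proposition~\ref{prop:cycle} to decompose it into edge-disjoint cycles, observes that their total delay equals $\sum_{i}d(P_{i}^{*})-\sum_{i}d(P_{i})<0$, and concludes by averaging. Your extra verification that each resulting cycle actually lies in $\widetilde{G}$ is a detail the paper leaves implicit, and is a welcome addition.
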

From Proposition \ref{prop:cyclever} and Proposition \ref{prop:cycle},
an idea for the improved algorithm is to find the set of cycles $\{O_{1},\dots,O_{h}\}$
and use them to improve the $k$ disjoint paths $P_{1},\,\dots,\, P_{k}$
to an optimal solution $P_{1}^{*},\,\dots,\, P_{k}^{*}$. However,
it is hard to identify all the cycles $\{O_{1},\,\dots,\, O_{h}\}$
exactly. So the key idea of the improved algorithm is generally a
greedy approach, which repeats computing a ``best'' cycle (i.e.
cycle $O$ with optimum $\frac{d(O)}{c(O)}$), and using it to improve
the current $k$ disjoint paths towards an optimal solution, until
a desired solution is obtained. However, a ``best'' cycle is probably
\emph{NP}-hard to compute. So our algorithm use a \emph{bicameral}
cycle instead. Let $P_{1},\,\dots,\, P_{k}$ be the current solution
to the $k$RSP problem. Then the main steps of our algorithm roughly
proceeds as below:

\quad{}\emph{ }\textbf{\emph{While}}\emph{ $\sum_{i=1}^{k}d(P_{i})>D$
}\textbf{\emph{do}}\emph{ }

\quad{}\emph{ }\quad{}\emph{ Compute a }bicameral\emph{ cycle $O$; }

\quad{}\emph{ }\quad{}\emph{ Set $\{P_{1},\,\dots,\, P_{k}\}\leftarrow\{P_{1},\,\dots,\, P_{k}\}\oplus O$. }

\section{Proof of Lemma 3}

This section will first give the formal definition of bicameral cycles
as well as the formal layout of our algorithm, then show that, by
using the defined bicameral cycles, our algorithm can output a solution
with a bifactor ratio $(1,\,2)$, and terminate in pseudo polynomial
time.

\subsection{The Improved Algorithm}

Let $P_{1},\,\dots,\, P_{k}$ be a feasible solution to  $k$RSP. W.l.o.g., assume $\sum_{i=1}^{k}d(P_{i})>D$. Let $\Delta D=D-\sum_{i=1}^{k}d(P_{i})$
and $\Delta C=C_{OPT}-\sum_{i=1}^{k}c(P_{i})$, where $C_{OPT}$ is the cost of
an optimal solution to $k$RSP. Let $O$ be a cycle in
the residual graph wrt $P_{1},\,\dots,\, P_{k}$. Then because the
meaning of $\frac{d(O)}{c(O)}$ depends on positive or negative $d(O)$
(or $c(O)$). So we have three types of bicameral cycles accordingly,
as in the following:
\begin{defn}
\label{bicy}(\emph{bicameral} \emph{cycles})
\begin{enumerate}
\item If $d(O)<0$ and $c(O)\leq0$ or $d(O)\leq0$ and $c(O)<0$, then
$O$ is a (type-0) \emph{bicameral} \emph{cycle};
\item Otherwise,

\begin{enumerate}
\item If $d(O)<0$ and $0<c(O)\leq C_{OPT}$, then $O$ is a (type-1) bicameral
cycle iff $\frac{d(O)}{c(O)}\leq\frac{\Delta D}{\Delta C}$, where
$\Delta D=D-\sum_{i=1}^{k}d(P_{i})$ and $\Delta C=\sum_{i=1}^{k}c(P_{i}^{*})-\sum_{i=1}^{k}c(P_{i})$;
\item If $d(O)\geq0$ and $-C_{OPT}\leq c(O)<0$, then $O$ is a (type-2) bicameral
cycle iff $\frac{d(O)}{c(O)}\geq\frac{\Delta D}{\Delta C}$.
\end{enumerate}
\end{enumerate}
\end{defn}
Clearly, a type-0 bicameral cycle is the first choice for our algorithm,
because it can be used to decrease the delay of the $k$ disjoint
paths without any cost increment according to our algorithm. Differently,
using a type-1 (2) cycle $O$ for cycle cancellation will decrease
delay (cost), but increase the cost (delay) of the current solution.
Since computing a bicameral cycle is one of two tricky tasks
of this paper, the algorithm will be deferred to Section 4. The formal
layout of our algorithm is as in Algorithm \ref{alg:An-improved-algorithm}.

\begin{algorithm}
\textbf{Input:} Graph $G$, specified vertices $s$ and $t$, a cost
function $c(e)$ and a delay function $d(e)$ on edge $e$, and a
delay bound $D\in\mathbb{Z}_{0}^{+}$;

\textbf{Output}: $\{P_{1},\dots,P_{k}\}$, the set of edges that compose
$k$ disjoint $st$-paths.
\begin{enumerate}
\item Compute $k$ disjoint $st$-paths $P_{1},\dots,P_{k}$ by the LP-rounding
algorithm as in \cite{DBLP:conf/faw/Guo14};
\item \textbf{While} $\sum_{i=1}^{k}d(P_{i})>D$ \textbf{do}

\begin{enumerate}
\item \textbf{If} there exists no bicameral cycle in $\widetilde{G}$, \textbf{then}
return ``infeasible''; /{*}The instance is infeasible.{*}/
\item Compute a bicameral cycle $O$;
\item $\{P_{1},\dots,P_{k}\}\leftarrow\{P_{1},\dots,P_{k}\}\oplus O$;
\end{enumerate}
\item Return $\{P_{1},\dots,P_{k}\}$.
\end{enumerate}
\protect\caption{\label{alg:An-improved-algorithm}An improved algorithm for $k$RSP.}
\end{algorithm}

We would like to give some remarks on the definition of bicameral
cycles before presenting the ratio proof. Firstly, a bicameral cycle $O$
is not a ``best'' cycle, it suffices only $\frac{d(O)}{c(O)}\leq\frac{\Delta D}{\Delta C}$ (type 1 bicameral cycle for example). There may be other far better cycles for cycle cancellation, but it takes too long to compute. Secondly,
a type-1 bicameral cycle $O_{1}$ must satisfy an additional constraint
$0<c(O_{1})\leq C_{OPT}$, and a type-1 bicameral cycle $O_{2}$ must satisfy
$-C_{OPT}\leq c(O)<0$. This makes the definition of bicameral cycle more
complicated, but it is essential. Since without the constraint, the
ratio of the algorithm will become $(1+\alpha,\,1+\frac{1}{\alpha})$.
In this case, the cost of the solution resulting from the algorithm
could be very large when $\alpha$ is a small number, say $\alpha=\frac{1}{D}$
(see Figure \ref{fig:A-counter-example} for example).

\begin{figure*}
\begin{centering}
\includegraphics{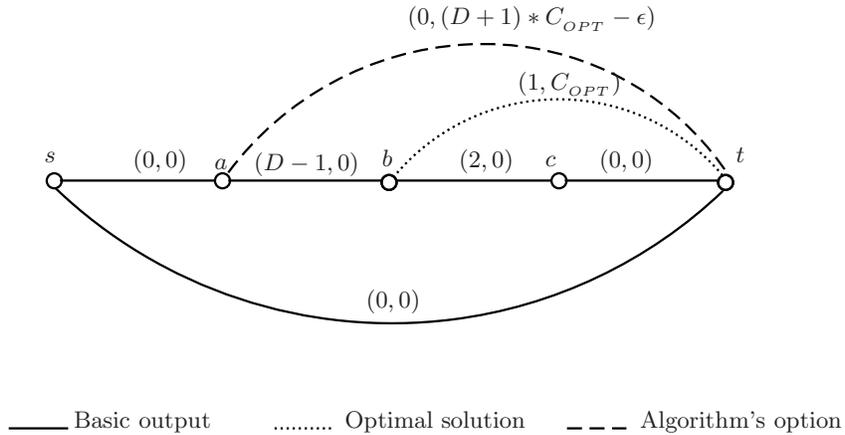}
\par\end{centering}

\protect\caption{\label{fig:A-counter-example}An example for execution of Algorithm
\ref{alg:An-improved-algorithm} without consider the constraint on
the cost: $sabct$ and $st$ is the solution resulted from the simple
approximation algorithm in \cite{lorenz2001simple,DBLP:conf/faw/Guo14};
$sat$ and $st$ would the output of the algorithm, with cost $C_{OPT}*(D+1)-\epsilon$
and delay $0$; while the optimal solution is $sabt$ and $st$, with
cost $C_{OPT}$ and delay $D$.}
\end{figure*}

Because of the complicated constraints on type-1 and type-2 bicameral
cycles, and the fact that the residual graph defined in this paper
allows both negative cost and delay on edges, previous techniques
as in \cite{DBLP:conf/faw/Guo14,orda2004efficient,GuoSL13} are not
suitable for computing a bicameral cycle. Anyhow, this paper figures
out a constructive method for computing bicameral cycles, which will
be shown in Section 4.

\subsection{Proof of Approximation Ratio and Time Complexity }

Assume that the algorithm terminates in $f$ iterations. Without loss
of generality, assume that $f>1$. Let $O_{i}$ be a \emph{bicameral}
cycle in the $i$th iteration, and $D_{i}$ and $C_{i}$ be the delay
and cost of the current solution, respectively. Let $\Delta D_{i}=D-D_{i}$, $\Delta C_{i}=C_{OPT}-C_{i}$,
and $r_{i}=\frac{\Delta D_{i}}{\Delta C_{i}}$. The ratio of Algorithm
\ref{alg:An-improved-algorithm} is as below:
\begin{lem}
\label{lem:ratio}If the $k$RSP problem is feasible, Algorithm \ref{alg:An-improved-algorithm}
outputs a solution with delay strictly bounded by $D$ and cost bounded
by $2C_{OPT}$.\end{lem}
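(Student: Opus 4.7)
The plan is to follow the cost-delay pair $(C_i, D_i)$ of the current solution through the iterations of the while-loop and to maintain the invariant that $C_i\le C_{OPT}$ whenever $D_i>D$. By Lemma~\ref{lem:guofaw}, the LP-rounded starting solution satisfies $D_0\le\alpha D$ and $C_0\le(2-\alpha)C_{OPT}$ for some $\alpha\in[0,2]$. If $\alpha\le1$ the while-loop body is never entered and the claim is immediate, so I restrict attention to $\alpha>1$, under which $C_0<C_{OPT}$ and $D_0>D$.

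I would prove the invariant by induction on $i$, splitting on the type of the bicameral cycle $O$ applied in iteration $i$. Type-0 and type-2 cycles both satisfy $c(O)\le 0$, so $C_{i+1}\le C_i\le C_{OPT}$ is immediate. The critical case is type-1, where $c(O)>0$, $d(O)<0$, $c(O)\le C_{OPT}$, and $\frac{d(O)}{c(O)}\le\frac{D-D_i}{C_{OPT}-C_i}$. The invariant only needs to be re-established when the loop continues after the step, i.e., when $D_{i+1}>D$, in which case $d(O)>D-D_i$ with both sides negative, giving $|d(O)|<|D-D_i|$. Multiplying the bicameral inequality by the positive quantity $c(O)(C_{OPT}-C_i)$ gives $d(O)(C_{OPT}-C_i)\le c(O)(D-D_i)$, and dividing by the negative quantity $D-D_i$ flips the sign to yield $c(O)\le(C_{OPT}-C_i)\cdot\frac{|d(O)|}{|D-D_i|}<C_{OPT}-C_i$, whence $C_{i+1}<C_{OPT}$.

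At termination (iteration $f$ with $D_f\le D$), the cost bound follows from one final case split on the type of the last cycle. For types 0 and 2, $C_f\le C_{f-1}\le C_{OPT}\le 2C_{OPT}$ directly. For a final type-1 cycle, the bicameral definition gives $c(O)\le C_{OPT}$, so $C_f=C_{f-1}+c(O)\le C_{OPT}+C_{OPT}=2C_{OPT}$. Combined with the termination condition $D_f\le D$, this is the desired bifactor $(1,2)$ guarantee.

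The main obstacle I expect is showing that a bicameral cycle always exists whenever $D_i>D$ and the instance is feasible, so that step~2(a) of Algorithm~\ref{alg:An-improved-algorithm} does not incorrectly declare infeasibility. I plan to handle this via Proposition~\ref{prop:cycle}: the symmetric difference $\{P_1^*,\dots,P_k^*\}\oplus\{\overline{P_1},\dots,\overline{P_k}\}$ decomposes into edge-disjoint cycles with total delay $\Delta D_i<0$ and total cost $\Delta C_i$, and an averaging argument over their $\frac{d}{c}$ ratios should extract at least one cycle meeting the bicameral bounds, with cost magnitude controlled by $C_{OPT}$ because each component cycle lies in the union of a single optimal and a single current path. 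This existence fact, together with the pseudo-polynomial-time bicameral-cycle construction of Section~4, also underwrites the runtime bound stated in Lemma~\ref{lem:mainre}.
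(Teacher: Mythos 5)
Your proposal is correct and follows essentially the same route as the paper: an induction maintaining $C_i<C_{OPT}$ while the delay still exceeds $D$ (using the type-1 bicameral inequality $\frac{d(O)}{c(O)}\leq\frac{D-D_i}{C_{OPT}-C_i}$ together with $|d(O)|<|D-D_i|$ to get $c(O)<C_{OPT}-C_i$), followed by the observation that the final iteration adds at most $C_{OPT}$ because the bicameral definition caps $c(O)$ by $C_{OPT}$. Your closing remark on existence of a bicameral cycle when $D_i>D$ matches the paper's Lemma~\ref{lem:existenceofbc} and Theorem~\ref{thm:corethrcycles-}, which the paper likewise defers to Section~4.
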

\begin{proof}
Algorithm \ref{alg:An-improved-algorithm} terminates only when the
delay constraint is satisfied. Then from Lemma \ref{lem:existenceofbc},
the lemma is obviously true for the delay constraint part. For the
cost part, according to the definition of bicameral cycles, the cost
augmentation in the last iteration, i.e., the $f$th iteration, is
at most $C_{OPT}$. Then we need only to show that $C_{f-1}\leq C_{OPT}$. Below
is the detailed proof by using mathematical induction. According to
Lemma \ref{lem:guofaw}, for the cost $C_{0}$ and delay $D_{0}$
of the first phase, there exist $0\leq\alpha\leq2$, such that $C_{0}=\alpha C_{OPT}$
and $D_{0}=(2-\alpha)D$. Then since the algorithm didn't terminate
in the first iteration, we have $D_{0}>D$ and $C_{0}<C_{OPT}$. By induction,
$C_{i}<C_{OPT}$ holds. It remains only to prove $C_{i+1}<C_{OPT}$ for the case
$C_{i+1}>C_{i}$. That is, the computed bicameral cycle $O_{i}$ suffices
$c(O_{i})>0$. According to the definition of type-1 bicameral cycles,
we have
\begin{equation}
\frac{d(O_{i})}{c(O_{i})}=\frac{D_{i+1}-D_{i}}{C_{i+1}-C_{i}}\leq\frac{\Delta D_{i}}{\Delta C_{i}}=\frac{D-D_{i}}{C_{OPT}-C_{i}}.\label{eq:ratio}
\end{equation}
Before the $f$th iteration the delay of the solution is always larger
than $D$, so $D_{i+1}>D$ and $D_{i}>D$ both hold. Thus, we have:
\begin{equation}
0>D_{i+1}-D_{i}>D-D_{i}.\label{eq:di1di}
\end{equation}
 Combining Inequality (\ref{eq:ratio}) and Inequality (\ref{eq:di1di})
yields $C_{i+1}-C_{i}<C_{OPT}-C_{i}$, so $C_{i+1}<C_{OPT}$. This completes the
proof.
\end{proof}
The statement of this lemma is only for the case that the $k$RSP
problem is feasible. When the instance for $k$RSP is infeasible,
i.e., there does not exist $k$ disjoint paths satisfying the delay
constraint $D$, Algorithm \ref{alg:An-improved-algorithm} will detect
the infeasibility and return ``infeasible'' in Step 2(a).

\begin{figure*}
\begin{centering}
\includegraphics[width=0.9\textwidth]{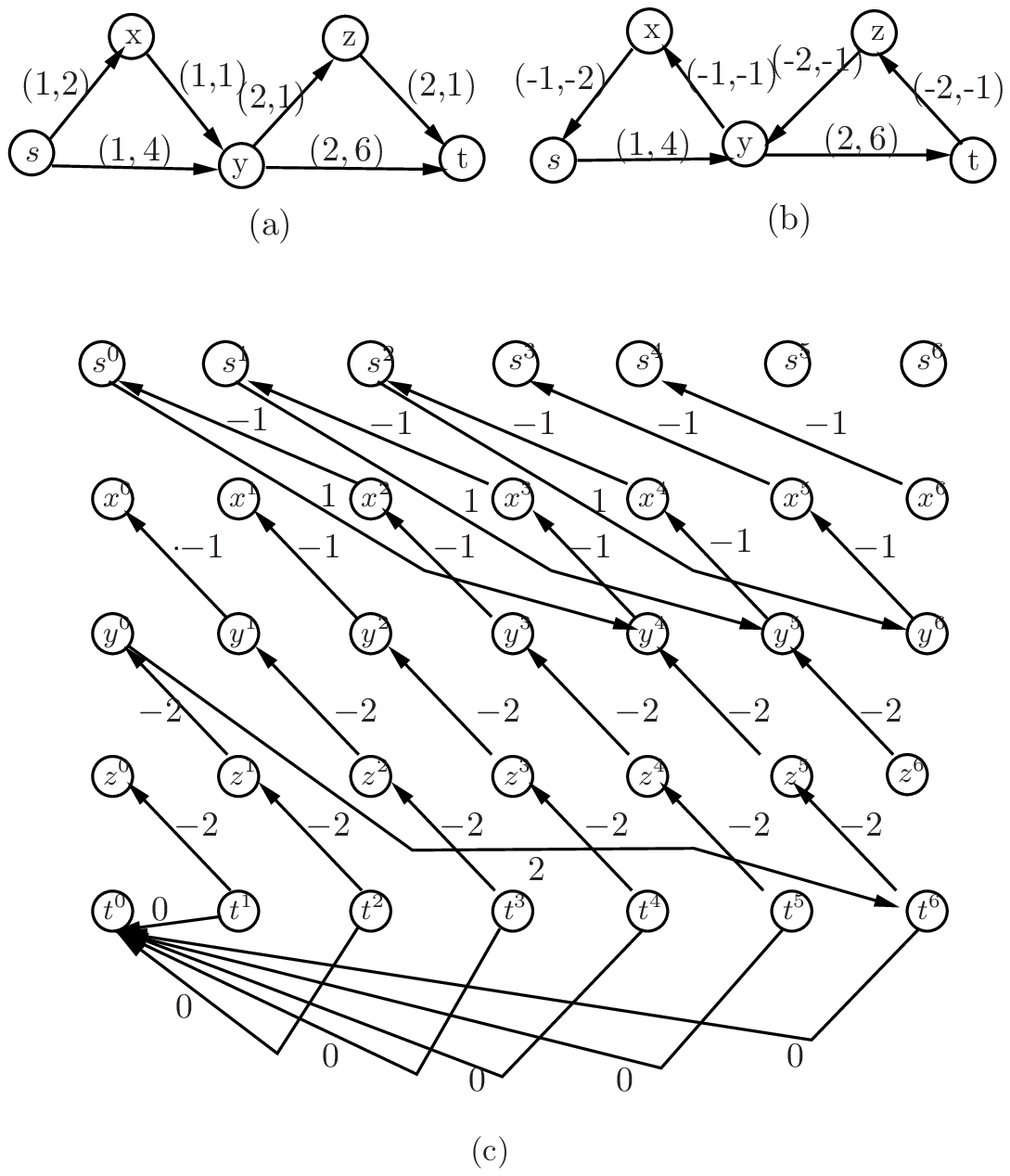}
\par\end{centering}

\protect\caption{\label{fig:Construction-of-acyclic}Construction of auxiliary graph
$H$ with cost constraint $B=6$. (a) graph $G$; (b) graph $\widetilde{G}$
wrt path $sxyzt$; (c) auxiliary graph $H$. }
\end{figure*}

Before presenting the time complexity analysis of Algorithm \ref{alg:An-improved-algorithm},
we would like first to investigate some properties of  bicameral
cycles,  as in the following lemma:
\begin{lem}
\label{lem:limofval}For the $i$th iteration, $i<f$, at least one
of the following two cases holds:
\begin{enumerate}
\item $r_{i+1}=r_{i}$ and $\Delta D_{i+1}<\Delta D_{i}$ ;
\item $r_{i+1}>r_{i}$.
\end{enumerate}
\end{lem}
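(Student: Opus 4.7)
The plan is to reduce the claim to the single algebraic identity
\begin{equation*}
r_{i+1} - r_i \;=\; \frac{c(O_i)\bigl(r_i - d(O_i)/c(O_i)\bigr)}{\Delta C_{i+1}},
\end{equation*}
valid when $c(O_i)\neq 0$, which follows by direct manipulation from $\Delta D_{i+1}=\Delta D_i-d(O_i)$ and $\Delta C_{i+1}=\Delta C_i-c(O_i)$. By the induction already carried out in the proof of Lemma \ref{lem:ratio}, $\Delta C_j>0$ for every $j\le f-1$, so the denominator is strictly positive and the sign of $r_{i+1}-r_i$ is controlled entirely by $c(O_i)$ and by the deviation $r_i-d(O_i)/c(O_i)$.

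I would then do a case analysis driven by Definition \ref{bicy}. For a type-0 cycle, either $c(O_i)=0$ (forcing $d(O_i)<0$, which preserves $\Delta C$ while moving $\Delta D$ toward zero and therefore yielding $r_{i+1}>r_i$), or $c(O_i)<0$ with $d(O_i)\le 0$, in which case $d(O_i)/c(O_i)\ge 0>r_i$ and the identity forces a strict increase of $r$. For a type-1 cycle, the bicameral condition $d(O_i)/c(O_i)\le r_i$ paired with $c(O_i)>0$ makes the numerator nonnegative, so $r_{i+1}\ge r_i$. Type-2 is symmetric: $c(O_i)<0$ combined with $d(O_i)/c(O_i)\ge r_i$ also produces a nonnegative numerator. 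Whenever any of these inequalities is strict, case 2 of the lemma holds.

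The remaining work is the equality situation. The identity shows $r_{i+1}=r_i$ can occur only when $d(O_i)/c(O_i)=r_i$ exactly, i.e., when the cycle's cost-delay vector is collinear with the ray from the origin through $(\Delta C_i,\Delta D_i)$. Under this collinearity I would write $d(O_i)=r_i\cdot c(O_i)$ directly and, using $r_i<0$ (a consequence of $D_i>D$ and $\Delta C_i>0$ in iterations $i<f$), read off the sign of $d(O_i)$ from the sign of $c(O_i)$; comparing $\Delta D_{i+1}=\Delta D_i-d(O_i)$ to $\Delta D_i$ then recovers the defect-change claim of case 1. The main obstacle I anticipate is precisely this equality step: one must verify that the cycle used is non-trivial (so that $d(O_i)\neq 0$ and the comparison is strict), and that the bicameral cycle returned by the algorithm of Section 4 never produces the degenerate alignment in which both cases of the lemma would simultaneously fail; this is where the strict positivity $\Delta C_{i+1}>0$, forcing $c(O_i)<\Delta C_i$, is essential for pinning down the sign of $d(O_i)=r_i c(O_i)$.
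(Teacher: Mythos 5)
Your proposal is correct and follows essentially the same route as the paper: a case analysis over the three bicameral cycle types, using the defining inequality $\frac{d(O_i)}{c(O_i)}\lessgtr r_i$ together with $\Delta D_{i+1}=\Delta D_i-d(O_i)$, $\Delta C_{i+1}=\Delta C_i-c(O_i)$ and the positivity of $\Delta C$ from the induction in Lemma~\ref{lem:ratio}; your single sign identity for $r_{i+1}-r_i$ is just a compact repackaging of the paper's Inequalities~(\ref{eq:case1}) and the analogous type-2 computation. The equality subcase you isolate (where $d(O_i)=r_ic(O_i)$ forces $d(O_i)\neq 0$ and hence strict change in the delay defect) is exactly how the paper disposes of Clause~1, so no new idea is needed there.
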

\begin{proof}
Recall that $O_{i}$ is the bicameral cycle computed in the $i$th
iteration. If $O_{i}$ is a type-0 bicameral cycle, then $d(O_{i})<0$
and $-C_{OPT}\leq c(O_{i})\leq0$, or $d(O_{i})\leq0$ and $-C_{OPT}\leq c(O_{i})<0$
hold. For these two cases of type-0 bicameral cycle, Clause 2 obviously
holds. So we need only to consider type-1 and type-2 cycles, which
are as below:
\begin{enumerate}
\item $d(O_{i})<0$ and $0<c(O_{i})<C_{OPT}$

According Definition \ref{bicy}, $\frac{d(O_{i})}{c(O_{i})}\leq r_{i}$
holds. Then we have
\begin{equation}
d(O_{i})\leq r_{i}\cdot c(O_{i}).\label{eq:1}
\end{equation}
After the cycle cancellation wrt $O_{i}$, $r_{i+1}=\frac{\Delta D_{i}-d(O_{i})}{\Delta C_{i}-c(O_{i})}$.
Combining this inequality with Inequality (\ref{eq:1}) yields

\begin{equation}
r_{i+1}\geq\frac{\Delta D_{i}-r_{i}\cdot c(O_{i})}{\Delta C_{i}-c(O_{i})}=r_{i}\cdot\frac{\Delta C_{i}-c(O_{i})}{\Delta C_{i}-c(O_{i})}=r_{i}.\label{eq:case1}
\end{equation}

Since $d(O_{i})<0$, $\Delta D_{i+1}<\Delta D_{i}$ holds. So if the
two sides of Inequality (\ref{eq:case1}) are equal, Clause 1 holds;
otherwise Clause 2 holds.

\item $-C_{OPT}<c(O_{i})<0$ and $d(O_{i})>0$

The proof of the second case is similar to the first case. $O_{i}$
is with maximum $\frac{d(O_{i})}{c(O_{i})}$ according to Definition
\ref{bicy}, so $\frac{d(O_{i})}{c(O_{i})}>r_{i}$ holds, and hence

\end{enumerate}

\begin{equation}
d(O_{i})<r_{i}\cdot c(O_{i}).\label{eq:1-1}
\end{equation}

Then combining the above inequality with $r_{i+1}=\frac{\Delta D_{i}-d(O_{i})}{\Delta C_{i}-c(O_{i})}$,
we have

\[
r_{i+1}>\frac{\Delta D_{i}-r_{i}\cdot c(O_{i})}{\Delta C_{i}-c(O_{i})}=r_{i}\frac{\Delta C_{i}-c(O_{i})}{\Delta C_{i}-c(O_{i})}=r_{i}.
\]

Therefore, Clause 2 always holds for case 2. This completes the proof.
\end{proof}
Let $t_{bc}$ be the time of computing a bicameral cycle. We now consider
the time complexity of the algorithm. The key observation is that
$r_{i}$ is an arbitrary number. In fact, there are at most $\sum c(e)*\sum d(e)$
different values for $r$. That is because for the formula $r_{i}=\frac{\Delta D_{i}}{\Delta C_{i}}$,
the value of $\Delta D_{i}$ must be an integer between 0 and $-\sum d(e)$,
and so is $\Delta C_{i}$. Then according to Lemma \ref{lem:limofval},
the algorithm computes at most $\sum c(e)*\sum d(e)$ bicameral cycles
to decrease the value of $r_{i}$. On the other hand, the algorithm
computes at most $|D|*\sum c(e)*\sum d(e)$ bicameral cycles to decrease
the delay sum of the $k$ disjoint paths (Note that $\Delta D_{i}>\Delta D_{i+1}$
may hold). Therefore, we have the following lemma:
\begin{lem}
\label{lem:The-time-complexity}The time complexity of our algorithm
is $O(|D|*\sum c(e)*\sum d(e)*t_{bc})$.
\end{lem}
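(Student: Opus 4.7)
The plan is to bound the total number of iterations of the while loop in Algorithm~\ref{alg:An-improved-algorithm} and then multiply by the per-iteration cost, which is dominated by the bicameral-cycle computation and therefore is $O(t_{bc})$. The symmetric-difference update $\{P_1,\dots,P_k\}\leftarrow\{P_1,\dots,P_k\}\oplus O$ and the delay feasibility test are both $O(m)$ and can be absorbed into $t_{bc}$, so every iteration costs $O(t_{bc})$ and the whole proof reduces to counting iterations.

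First I would observe that $\Delta D_i = D - D_i$ and $\Delta C_i = C_{OPT} - C_i$ are integers with absolute values bounded by $\sum_{e} d(e)$ and $\sum_{e} c(e)$ respectively, so the potential $r_i = \Delta D_i/\Delta C_i$ takes at most $(\sum c(e))\cdot(\sum d(e))$ distinct rational values during the entire execution. This provides a finite pool of ``levels'' that $r_i$ can visit.

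Next I would apply Lemma~\ref{lem:limofval} iteration by iteration: in every iteration $i<f$ either (Case~2) $r_{i+1}>r_i$ strictly, or (Case~1) $r_{i+1}=r_i$ and $\Delta D_{i+1}<\Delta D_i$ strictly. Since the sequence $(r_i)$ is therefore monotone non-decreasing through a set of size at most $(\sum c(e))\cdot(\sum d(e))$, the number of Case-2 iterations is bounded by the same quantity. Within any maximal run of Case-1 iterations the integer $\Delta D_i$ strictly decreases while remaining in $[D-\sum_e d(e),\,-1]$ (the algorithm halts as soon as $\Delta D_i \geq 0$), so each such plateau run has length at most $|D|$, in line with the bookkeeping sketched in the paragraph preceding the lemma. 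Summing over plateaus gives a total iteration count of $O(|D|\cdot \sum c(e)\cdot \sum d(e))$, and multiplying by the per-iteration cost $O(t_{bc})$ yields the claimed bound $O(|D|\cdot \sum c(e)\cdot \sum d(e)\cdot t_{bc})$.

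The main obstacle is that a single potential is not sufficient: $r_i$ may plateau for many consecutive iterations, and across Case-2 ``jumps'' the quantity $\Delta D_i$ may rebound upward (as warned in the excerpt), so one cannot bound iterations by tracking $\Delta D_i$ alone or $r_i$ alone. Lemma~\ref{lem:limofval} is the tool that cleanly separates these two regimes, guaranteeing that whenever $r_i$ fails to strictly increase $\Delta D_i$ must strictly decrease; once that dichotomy is in hand, the remainder of the argument is elementary integer counting.
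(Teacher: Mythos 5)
Your overall strategy is exactly the paper's: the paper's justification for this lemma is the paragraph immediately preceding it, which counts the at most $\sum c(e)\cdot\sum d(e)$ distinct values that $r_{i}=\Delta D_{i}/\Delta C_{i}$ can take, invokes Lemma \ref{lem:limofval} to charge every iteration either to a strict increase of $r_{i}$ or to a strict change of the delay gap at fixed $r_{i}$, and multiplies by $|D|$ for the length of each plateau. Your decomposition into Case-2 jumps and Case-1 plateaus, with $t_{bc}$ dominating the per-iteration work, is the same argument stated more explicitly.

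The one step that does not go through as you wrote it is the plateau bound. You assert that $\Delta D_{i}$ is a strictly decreasing integer confined to $[D-\sum_{e}d(e),\,-1]$ and conclude that a plateau has length at most $|D|$; but an interval of that form contains about $\sum_{e}d(e)-D$ integers, not $D$, so your own premises only yield a plateau length of $O(\sum_{e}d(e))$ and hence a total iteration count of $O(\sum c(e)\cdot(\sum d(e))^{2})$ rather than $O(|D|\cdot\sum c(e)\cdot\sum d(e))$. To actually obtain the factor $|D|$ one needs the invariant $|\Delta D_{i}|\leq D$, i.e., that the current delay never exceeds $2D$; this holds for the initial solution by Lemma \ref{lem:guofaw}, but type-2 cycle cancellations increase the delay, so maintaining the invariant across Case-2 jumps requires an argument that neither you nor, to be fair, the paper supplies --- the paper simply asserts the $|D|\cdot\sum c(e)\cdot\sum d(e)$ count in the paragraph before the lemma. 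So you have faithfully reproduced the paper's reasoning, including its one unjustified step; the constant-$|D|$ plateau length is where a careful reader will push back on both versions.
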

The time complexity of our algorithm can be analyzed better. However,
we omit the detailed and sophisticated analysis, because the length limit, and the focus of this paper should be on the approximation ratio.

\section{Computing Bicameral Cycle $O_{i}$ }

This section will show how to compute a bicameral cycle $O_{i}$.
This task is not easy, because almost every task involving bicriteria
negative cycle is \emph{NP}-hard. The key idea of our algorithm is
to construct two auxiliary graphs $H_{v}^{+}(B)$ and $H_{v}^{-}(B)$,
such that every cycle containing $v$  with total cost between
$0$ and $B$ in $\widetilde{G}$ is corresponding to a cycle in $H_{v}^{+}(B)$
while every cycle containing $v$  with cost between $-B$
and $0$ corresponds to a cycle in $H_{v}^{-}(B)$, where $B$ is
a given bound on cost and $v$ is a vertex of $\widetilde{G}$. Then
by computing the cycles in $H_{v}^{+}(B)$ and $H_{v}^{-}(B)$ for
every $v\in\widetilde{G}$ and every necessary $B$, we could
find in $\widetilde{G}$ bicameral cycles if there exists any. The
construction of auxiliary graph will be given in Subsection 4.1 and
the other parts of the algorithm shall be given in Subsection 4.2.

\subsection{Construction of Auxiliary Graph $H_{v}(B)$}

The algorithm of constructing the auxiliary graph $H_{v}^{+}(B)$ and
$H_{v}^{-}(B)$ is inspired by the method of modeling a shallow-light
spanning tree path subject to multiple constraints \cite{gouveia2011modeling}.
The full layout of the construction $H_{v}^{+}(B)$ is as shown in
Algorithm \ref{alg:Construction-of-auxiliary} (An example of such
a construction is as depicted in Figure \ref{fig:Construction-of-acyclic}).

\begin{algorithm}
\textbf{Input}: Residual graph $\widetilde{G}$ with edge cost $c:\, e\rightarrow\mathbb{Z}_{0}^{+}$
and edge delay $d:\, e\rightarrow\mathbb{Z}_{0}^{+}$, a specified
vertex $v\in V$, a given cost constraint $B\in\mathbb{Z}^{+}$;

\textbf{Output}: Auxiliary graph $H_{v}^{+}(B)$.
\begin{enumerate}
\item \textbf{For} every vertex $v_{l}$ of $V$ \textbf{do}

\quad{}Add $B+1$ vertices $v_{l}^{0},\dots,v_{l}^{B}$ to $H_{v}^{+}(B)$
;

\item \textbf{For} every edge $e=\left\langle v_{j},v_{l}\right\rangle \in E$
\textbf{do}

\begin{enumerate}
\item if $c(e)\geq0$, add to $H_{v}^{+}(B)$ the edges $\left\langle v_{j}^{0},\, v_{l}^{c(e)}\right\rangle ,\,\dots,\left\langle v_{j}^{B-c(e)},\, v_{l}^{B}\right\rangle $,
each of which is with delay $d(e)$;
\item if $c(e)<0$, add to $H_{v}^{+}(B)$ the edges $\left\langle v_{j}^{B},\, v_{l}^{B-|c(e)|}\right\rangle ,\,\dots,\left\langle v_{j}^{|c(e)|},\, v_{l}^{0}\right\rangle $,
each of which is with delay $d(e)$.
\end{enumerate}
\item \textbf{For} each vertex $v^{i}$ of $H_{v}^{+}(B)$ that corresponding
to the specified vertex $v\in G$ \textbf{do}

\quad{}Add edge $e(v^{i},\, v^{0})$ to $H_{v}^{+}(B)$ with delay
zero.
/{*} To construct $H_{v}^{-}(B)$ is to add edge $e(v^{i},\, v^{B})$. {*}/

\end{enumerate}
\protect\caption{\label{alg:Construction-of-auxiliary} Construction of auxiliary graph
$H_{v}^{+}(B)$.}
\end{algorithm}

\subsection{Computation of Bicameral Cycle $O_{i}$}

This subsection shall show how to use LP-rounding method to compute
cycles in $H_{v}^{+}(B)$ and $H_{v}^{-}(B)$, and obtain a bicameral
cycle in $\widetilde{G}$. We start from the following linear programming
formula:

\begin{equation}
\min\sum_{e\in H}c(e)x(e)\label{eq:lp}
\end{equation}

subject to

\[
\sum_{e\in\delta^{+}(v)}x_{e}-\sum_{e\in\delta^{-}(v)}x_{e}=\begin{array}{cc}
0 & \forall v\in V(H_{v}^{+}(B)\, or\, H_{v}^{-}(B))\end{array}
\]

\[
\sum_{e\in H}d(e)x(e)\leq\Delta D
\]

Unlike other linear programming formula for shortest paths, spanning
trees, or Steiner trees, $0\leq x(e)\leq1$ is not necessary for our
formula, because our goal is only to compute a bicameral cycle. Below
we shall discuss $H_{v}^{+}(B)$ only, since the case for $H_{v}^{-}(B)$
is similar.
\begin{lem}
\label{lem:LPsolution-tocycle}A solution to LP (\ref{eq:lp}) exactly
corresponds to a set of cycles with cost between $B$ and $-B$ in
residual graph $\widetilde{G}$.
\end{lem}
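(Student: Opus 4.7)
The plan is to establish a bijection between integral feasible solutions of LP~(\ref{eq:lp}) on $H_{v}^{+}(B)$ (respectively $H_{v}^{-}(B)$) and edge-disjoint unions of cycles in $\widetilde{G}$ passing through $v$ with total cost in $[0,B]$ (respectively $[-B,0]$); taking both together yields the stated range $[-B,B]$. Throughout, I will exploit the level-tracking construction in Algorithm~\ref{alg:Construction-of-auxiliary}: the superscript of a vertex $v_{l}^{p}$ encodes the cumulative cost accrued since leaving level $0$, and every lifted non-feedback edge changes this superscript by exactly $c(e)$.

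For the forward direction, I would take any integral feasible $\{x(e)\}$ and observe that the flow-conservation constraints at every vertex of $H_{v}^{\pm}(B)$ force the support of $x$ to be a circulation. By standard flow decomposition, this circulation breaks into an edge-disjoint union of simple directed cycles. Because only the feedback edges added in Step~3 of Algorithm~\ref{alg:Construction-of-auxiliary} leave the superscript unchanged, every simple cycle in $H_{v}^{+}(B)$ must traverse exactly one feedback edge $(v^{i},v^{0})$; projecting the remaining edges down to $\widetilde{G}$ (forgetting superscripts) yields a cycle through $v$ of cost exactly $i\in[0,B]$. Since each feedback edge carries zero delay, the LP objective $\sum c(e)x(e)$ and the delay budget $\sum d(e)x(e)\le\Delta D$ agree with the corresponding quantities of the projected cycles in $\widetilde{G}$.

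For the reverse direction, I would lift any cycle $O$ through $v$ with $c(O)=i\in[0,B]$ by starting at $v^{0}$, traversing the lifts of $O$'s edges while accumulating cost in the superscripts, and closing via the feedback edge $(v^{i},v^{0})$. To guarantee that every intermediate superscript remains in $[0,B]$, I would cyclically rotate the starting vertex of $O$ to a vertex at which the prefix cost sum is minimized; this forces all subsequent partial sums to lie in $[0,c(O)]\subseteq[0,B]$. The indicator vector of the resulting cycle in $H_{v}^{+}(B)$ clearly satisfies flow conservation, and its delay equals $d(O)$. A symmetric argument, using the feedback edges $(v^{i},v^{B})$, lifts cycles of cost in $[-B,0]$ into $H_{v}^{-}(B)$.

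The main obstacle I expect is precisely this prefix-sum issue in the reverse direction, because $c(O)\in[0,B]$ does not by itself guarantee that some rotation of $O$ keeps all partial cost sums inside $[0,B]$; the rotation trick works only when the range of partial sums has width at most $B$. I would resolve this by splitting the analysis across $H_{v}^{+}(B)$ and $H_{v}^{-}(B)$ and by observing that the outer algorithm invokes the construction for every relevant $B$, so cycles whose partial sums stray below $0$ are captured by the $H_{v}^{-}$ branch and those straying above $B$ are captured at a larger cost bound. As a final housekeeping step I would note that the flow-conservation block of the LP constraint matrix is totally unimodular, so when combined with the single delay-budget row the LP-rounding invoked later produces an integral circulation, legitimizing the use of "solution" in the statement of the lemma.
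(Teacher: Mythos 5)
Your core argument is the same as the paper's: you read the LP solution as a circulation in $H_{v}^{+}(B)$ or $H_{v}^{-}(B)$, decompose it into cycles, project each cycle to a closed walk in $\widetilde{G}$ (hence a union of cycles) using the fact that the superscript tracks accumulated cost, and lift cycles of $\widetilde{G}$ back into the auxiliary graph for the converse. This is exactly the content of Lemma \ref{lem:cyclecost} and the sentence preceding it. You are also right to single out the prefix-sum problem in the lifting direction: the paper's one-line argument (``we can find a path from $v^{0}$ to $v^{c(O')}$ since $c(O')\leq B$'') tacitly assumes that every partial cost sum along $O'$ stays inside $[0,B]$, which $c(O')\in[0,B]$ alone does not guarantee. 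Your minimum-prefix rotation is the right repair, but note that it closes the gap only when $B$ is at least the \emph{width} of the partial-sum range of the cycle; this holds for the largest value $B=\sum c(e)$ tried by the outer algorithm, but not for an arbitrary fixed $B$ as the lemma is stated, so your deferral to ``a larger cost bound'' is doing real work and should be made explicit.

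There is, however, one genuinely wrong step: the closing claim that total unimodularity of the flow-conservation block, ``combined with the single delay-budget row,'' yields an integral circulation. Appending the row $\sum_{e}d(e)x_{e}\leq\Delta D$ to the node-arc incidence matrix destroys total unimodularity, and the resulting polytope generally has fractional vertices; if it did not, the NP-hard restricted shortest path problem would be solvable exactly by linear programming. So you cannot restrict attention to integral solutions on these grounds. The paper sidesteps this by interpreting an LP solution as a set of \emph{fractional} cycles (fractional flow decomposition) and proving the correspondence cycle-by-cycle on the support; your forward and reverse maps survive this weakening unchanged, so the fix is simply to drop the integrality claim rather than to rely on it.
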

A solution to LP (\ref{eq:lp}) is apparently corresponding to a set
of fractional cycles of $H_{v}^{+}(B)$ (or $H_{v}^{-}(B)$). So we
need only to show that a cycle in $H_{v}^{+}(B)$ corresponds to a
set of cycles in $\widetilde{G}$. In fact, according to the construction
of $H_{v}^{+}(B)$ , we have the following lemma, from which the correctness
of Lemma \ref{lem:LPsolution-tocycle} can be immediately obtained.
\begin{lem}
\label{lem:cyclecost} A cycle in $H_{v}^{+}(B)$ corresponds to a
set of cycles in $\widetilde{G}$, each of which is with cost between
$B$ and $-B$. Conversely, a cycle in $\widetilde{G}$, containing
$v$ and with cost between $0$ and $B$, corresponds to a cycle in
$H_{v}^{+}(B)$.\end{lem}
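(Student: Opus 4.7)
The plan is to prove both halves by exploiting the layered structure of $H_{v}^{+}(B)$: the second coordinate of each copy $v_{l}^{i}$ records the running cost along any walk built from non-wrap-around edges, so that the edges of $H_{v}^{+}(B)$ are faithful copies of the edges of $\widetilde{G}$ decorated with a layer-shift equal to the cost. For the forward direction I would take any cycle $C$ in $H_{v}^{+}(B)$ and project it to $\widetilde{G}$ via the map $\varphi:\,v_{l}^{i}\mapsto v_{l}$. By construction, each non-wrap-around edge of $H_{v}^{+}(B)$ pulls back to a copy of an edge of $\widetilde{G}$ with the same cost and delay, while every wrap-around edge $v^{i}\to v^{0}$ projects to a loop at $v$ and is suppressed. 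The image is then a closed walk $W$ in $\widetilde{G}$; iterated peeling of simple cycles writes $W$ as an edge-disjoint union $O_{1}\cup\cdots\cup O_{h}$. To bound $|c(O_{j})|$, observe that each $O_{j}$ arises as the segment of $C$ between two appearances of a common projected vertex $u$, say $u^{i_{1}}$ and $u^{i_{2}}$ with $i_{1},\,i_{2}\in[0,\,B]$; the cumulative layer shift along that segment equals $i_{2}-i_{1}$, so $c(O_{j})\in[-B,\,B]$.

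For the reverse direction I would take a simple cycle $O$ in $\widetilde{G}$ that contains $v$ with $c(O)\in[0,\,B]$, rotate its cyclic listing to begin at $v$, and write $O=(v=u_{0},\,e_{1},\,u_{1},\,\dots,\,e_{m},\,u_{m}=v)$ with partial sums $s_{j}=\sum_{l\leq j}c(e_{l})$, so that $s_{0}=0$ and $s_{m}=c(O)\in[0,\,B]$. When every $s_{j}$ lies in $[0,\,B]$, the lift is immediate: each edge $e_{j}$ of $\widetilde{G}$ is present in $H_{v}^{+}(B)$ as an arc from $u_{j-1}^{s_{j-1}}$ to $u_{j}^{s_{j}}$ by Step~2 of Algorithm~\ref{alg:Construction-of-auxiliary}, and the wrap-around edge $v^{c(O)}\to v^{0}$ added in Step~3 closes the walk into a cycle of $H_{v}^{+}(B)$. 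When some $s_{j}$ escapes $[0,\,B]$, I would re-rotate the cyclic listing so that it begins at a vertex attaining the minimum partial sum; since the outer algorithm invokes Algorithm~\ref{alg:Construction-of-auxiliary} once for each $v\in V$, this amounts to selecting a different auxiliary graph from the enumerated family, and the symmetric construction $H_{v}^{-}(B)$ absorbs the remaining cases.

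The main obstacle I expect is the partial-sum bookkeeping in the reverse direction: the hypothesis $c(O)\in[0,\,B]$ controls only the endpoint of the running cost, whereas the layering of $H_{v}^{+}(B)$ demands that every intermediate partial sum lie in $[0,\,B]$. I would resolve this by the rotation trick above, together with the companion graph $H_{v}^{-}(B)$, which handles the cycles whose running cost first decreases. The forward direction is comparatively routine once one is careful to distinguish wrap-around edges from genuine edges during the projection and to justify the flow-type decomposition of $W$ into simple cycles.
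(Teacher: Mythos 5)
Your forward direction is essentially the paper's own argument (project $H_{v}^{+}(B)$ onto $\widetilde{G}$, obtain a closed walk, peel off contiguous segments between repeated vertices, and read each cost off as a difference $i_{2}-i_{1}$ of two layer indices in $[0,\,B]$), and it is fine. You have also correctly put your finger on exactly the point that the paper's half-sentence proof of the converse skips: the hypothesis $0\leq c(O)\leq B$ controls only the endpoint of the running cost, whereas lifting $O$ into $H_{v}^{+}(B)$ requires every intermediate partial sum to lie in $[0,\,B]$.

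However, your repair does not close this gap. Rotating the cyclic listing to start at a vertex attaining the minimum partial sum makes all partial sums nonnegative, but it does not bound them above by $B$: the new maximum equals $\max_{j}s_{j}-\min_{j}s_{j}$, the oscillation of the running cost, which can exceed $B$ even when $c(O)\in[0,\,B]$. Concretely, a residual cycle with consecutive edge costs $B+1$, $-(B+1)$, $1$ has net cost $1\leq B$, yet the edge of cost $B+1$ receives no copy whatsoever in Step 2(a) of Algorithm \ref{alg:Construction-of-auxiliary} (the list $\left\langle v_{j}^{0},\,v_{l}^{c(e)}\right\rangle,\dots,\left\langle v_{j}^{B-c(e)},\,v_{l}^{B}\right\rangle$ is empty once $c(e)>B$), so this cycle corresponds to no cycle in $H_{u}^{+}(B)$ for any $u$; no rotation, and no appeal to $H_{v}^{-}(B)$, can help. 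Moreover, even when the oscillation happens to be at most $B$, your rotation lands the cycle in $H_{u}^{+}(B)$ for the minimizing vertex $u$ rather than in $H_{v}^{+}(B)$ for the $v$ named in the statement, so at best you prove a weaker claim about $\bigcup_{u}H_{u}^{+}(B)$. What actually rescues the converse where it is used (Theorem \ref{thm:corethrcycles-}) is a stronger property of the specific cycles there: for $O_{i}^{*}\in\{P_{1}^{*},\dots,P_{k}^{*}\}\oplus\{\overline{P_{1}},\dots,\overline{P_{k}}\}$ and $B=B^{*}=\sum_{i}c(P_{i}^{*})$, every positive-cost edge comes from the optimal paths, so the total positive cost along the cycle---and hence the oscillation measured from the minimum---is at most $B^{*}$. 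To make your argument sound you must either add the hypothesis that the sum of the positive edge costs of $O$ (not merely its net cost) is at most $B$, or weaken the conclusion to membership in the union of the auxiliary graphs; the rotation trick alone is not enough.
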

\begin{proof}
Assume that $O$ is a cycle in $H_{v}^{+}(B)$. According to the rules
of adding the edges of $H_{v}^{+}(B)$, every edge in $H_{v}^{+}(B)$
either corresponds to an edge in $\widetilde{G}$ or is between duplications
of an identical vertex of $\widetilde{G}$. So $O$ corresponds to
a closed walk in $\widetilde{G}$, and hence it corresponds to a set
of cycles of $\widetilde{G}$. The cost bound of the set of cycles
follows from the construction of $H_{v}^{+}(B)$ similarly.

Conversely, let $O'$ be a cycle in $\widetilde{G}$, with cost between
$0$ and $B$, i.e., $0\leq c(O')\leq B$. Start from $v^{0}\in H_{v}^{+}(B)$,
a vertex corresponding to $v\in O'$, according to the edges on $O'$,
we can find a path from $v^{0}$ to $v^{c(O')}$, since $c(O')\leq B$.
The path plus the edge $e(v^{c(O')},v^{0})$ exactly compose a cycle
in $H_{v}^{+}(B)$.\end{proof}
\begin{thm}
\label{thm:corethrcycles-}Let cycles $O_{1},\dots,O_{z}$ be the
cycles in $\widetilde{G}$ corresponding to $\chi(v,B^{*})$ for every
$v\in\widetilde{G}$, where $B^{*}=\sum_{i=1}^{k}c(P_{i}^{*})$ and
$P_{1}^{*},\dots,P_{k}^{*}$ is an optimal solution to the $k$RSP
problem. Then there must be a bicameral cycle in $O_{1},\dots,O_{z}$
if the $k$RSP problem is feasible. \end{thm}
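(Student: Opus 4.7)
The plan is to prove the theorem by a two-step averaging argument that connects Proposition \ref{prop:cycle} with the LP of Subsection~4.2. First I would produce, in $\widetilde{G}$, a bicameral cycle whose absolute cost is at most $B^{*}=C_{OPT}$; then I would argue that such a cycle is ``seen'' by the LP on some $H_{v}^{+}(B^{*})$ or $H_{v}^{-}(B^{*})$, forcing at least one of $O_{1},\dots,O_{z}$ to be bicameral.

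For the first step, I take an optimal solution $P_{1}^{*},\dots,P_{k}^{*}$ and form $\{P_{1}^{*},\dots,P_{k}^{*}\}\oplus\{\overline{P_{1}},\dots,\overline{P_{k}}\}$. By Proposition \ref{prop:cycle} this decomposes into edge-disjoint cycles $Q_{1},\dots,Q_{h}$ in $\widetilde{G}$, with $\sum_{j}d(Q_{j})=\sum_{i}d(P_{i}^{*})-\sum_{i}d(P_{i})\leq\Delta D_{i}<0$ and $\sum_{j}c(Q_{j})=\Delta C_{i}>0$ (the positivity of $\Delta C_{i}$ comes from $C_{i}<C_{OPT}$, guaranteed by Lemma \ref{lem:ratio}). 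Then $\sum_{j}\bigl[d(Q_{j})\,\Delta C_{i}-c(Q_{j})\,\Delta D_{i}\bigr]\leq 0$, so some $Q_{j}$ satisfies $d(Q_{j})\,\Delta C_{i}\leq c(Q_{j})\,\Delta D_{i}$. A short case analysis on the sign of $c(Q_{j})$ matches this $Q_{j}$ to one of the three types of Definition \ref{bicy}. Its absolute cost is at most $B^{*}$ because its positive-cost edges lie in $\bigcup_{i}P_{i}^{*}$ (total cost at most $C_{OPT}$) and its negative-cost edges in $\bigcup_{i}\overline{P_{i}}$ (total absolute cost at most $C_{i}<C_{OPT}$).

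For the second step, I would fix any vertex $v$ lying on $Q_{j}$. By Lemma \ref{lem:cyclecost}, $Q_{j}$ corresponds to a cycle of $H_{v}^{+}(B^{*})$ when $c(Q_{j})\geq 0$ and of $H_{v}^{-}(B^{*})$ when $c(Q_{j})\leq 0$, so after rescaling its indicator vector to make the delay constraint tight (which is allowed because $x\geq 0$ carries no upper bound in LP (\ref{eq:lp})) I obtain a feasible LP solution. Consequently $\chi(v,B^{*})$ achieves cost at most $c(Q_{j})\leq\Delta C_{i}$ and delay at most $\Delta D_{i}$; decomposing it via Lemma \ref{lem:LPsolution-tocycle} produces cycles of $\widetilde{G}$ each with $|c(\cdot)|\leq B^{*}$, and re-applying the same weighted-pigeonhole inequality to this decomposition isolates one that is bicameral.

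The main obstacle I anticipate lies in the last step: the LP may return a fractional circulation that decomposes into many cycles, and I must make sure that the aggregate inequalities from the first step transfer cleanly to the decomposition, so that at least one individual cycle is bicameral rather than a convex combination of non-bicameral ones. The cost bound $B^{*}$ is automatic from the auxiliary-graph construction, but verifying that the rescaling keeps every extracted cycle inside $H_{v}^{\pm}(B^{*})$, and that the pigeonhole can legitimately be re-run on the cycles of the LP decomposition, is where the argument is most delicate.
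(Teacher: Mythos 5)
Your overall route is the same as the paper's: both arguments decompose $\{P_{1}^{*},\dots,P_{k}^{*}\}\oplus\{\overline{P_{1}},\dots,\overline{P_{k}}\}$ into cycles via Proposition \ref{prop:cycle}, bound each cycle's cost by $B^{*}$ so that Lemma \ref{lem:cyclecost} applies, and then extract a bicameral cycle from the cycle decomposition of the LP optimum. Your weighted-pigeonhole inequality $\sum_{j}\bigl[d(Q_{j})\Delta C_{i}-c(Q_{j})\Delta D_{i}\bigr]\leq 0$ is a cleaner, more defensible version of the paper's device of selecting the cycle of minimum $d/c$ among the negative-delay cycles (and of maximum $d/c$ among the negative-cost ones), and your explicit rescaling to obtain LP feasibility addresses a point the paper passes over in silence.

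There is, however, one concrete gap in your second step. Rescaling the indicator vector of $Q_{j}$ ``to make the delay constraint tight'' is only possible when $d(Q_{j})<0$. Your pigeonhole may return only a cycle with $d(Q_{j})\geq0$ and $c(Q_{j})<0$ (a type-2 candidate): for instance, with $\Delta D_{i}=-10$ and $\Delta C_{i}=10$ the decomposition could consist of $Q_{1}$ with $(d,c)=(-20,30)$ and $Q_{2}$ with $(d,c)=(10,-20)$; only $Q_{2}$ satisfies $d\,\Delta C_{i}\leq c\,\Delta D_{i}$, yet no positive multiple of its indicator vector satisfies $\sum_{e}d(e)x(e)\leq\Delta D_{i}<0$, so it yields no feasible point of LP (\ref{eq:lp}) and hence no bound on $\chi(v,B^{*})$. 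The paper covers this branch by separately harvesting, from the solutions on the graphs $H_{v}^{-}(B^{*})$, a cycle of maximum $d/c$ with negative cost among the released cycles (its justification there is admittedly no more detailed than yours). To close your version you would need either to show that the pigeonhole set always contains a cycle with strictly negative delay, or to give a separate argument for the nonnegative-delay branch. Two smaller loose ends: the degenerate case $c(Q_{j})=d(Q_{j})=0$ satisfies your inequality but is not bicameral (it can be excluded, since an all-degenerate pigeonhole set would force $\sum_{j}d(Q_{j})=0>\Delta D_{i}$), and the final re-application of the pigeonhole must be run on the weighted cycle decomposition of a fractional circulation, a delicacy you correctly anticipate and which does go through by the same sign computation.
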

\begin{proof}
If there exist $c(O_{i})\leq0$ and $d(O_{i})<0$, or $d(O_{i})\leq0$
and $c(O_{i})<0$, then $O_{i}$ is type-0 bicameral cycle. So without
loss of generality, we can assume that either $c(O_{i})\geq0$ or
$d(O_{i})\geq0$ holds. Let $\chi^{+}(v,\, B^{*})$ be an optimal
solution to LP (\ref{eq:lp}) against $H_{v}^{+}(B)$.

Let $P_{1},\dots,P_{k}$ be the current solution against $k$RSP.
Let $\{O_{1}^{*},\dots,O_{h}^{*}\}=\{P_{1}^{*},\dots,P_{k}^{*}\}\oplus\{\overline{P_{1}},\dots,\overline{P_{k}}\}$.
Then apparently
\[
-\sum_{i=1}^{k}c(P_{i}^{*})\leq-\sum_{i=1}^{k}c(P_{i})\leq c(O_{i}^{*})\leq\sum_{i=1}^{k}c(P_{i}^{*}).
\]

That is, cycle $O_{i}^{*}$ with $d(O_{i}^{*})<0$ and $c(O_{i}^{*})>0$
can find its corresponding cycle in $\cup_{v\in\widetilde{G}}H_{v}^{+}(B^{*})$.
Without loss of generality assume that $O_{i}^{*}$ with minimum $\frac{d(O_{i}^{*})}{c(O_{i}^{*})}$
among all other cycles of $\{O_{1}^{*},\dots,O_{h}^{*}\}$ with negative
delay, and its corresponding cycle is in $H_{v_{j}}(B^{*})$. Then
we have $\frac{d(\chi(v_{j},B^{*}))}{c(\chi(v_{j},B^{*}))}\leq\frac{d(O_{i}^{*})}{c(O_{i}^{*})}$.
So in the set of cycles of $\widetilde{G}$ that corresponding to
$\chi^{+}(v,B^{*})$, there must exist a cycle $O_{1}$ with $\frac{d(O_{1})}{c(O_{1})}\leq\frac{d(O_{i}^{*})}{c(O_{i}^{*})}$
and $d(O_{1})<0$, or $\frac{d(O_{2})}{c(O_{2})}\geq\frac{d(O_{i}^{*})}{c(O_{i}^{*})}$
and $c(O_{2})<0$. The case for $O_{i}^{*}$ with $d(O_{i}^{*})\geq0$,
$c(O_{i}^{*})<0$ and maximum $\frac{d(O_{i}^{*})}{c(O_{i}^{*})}$
is similar in $\cup_{v\in\widetilde{G}}H_{v}^{-}(B^{*})$, and hence
we can find $O'_{1}$ and $O'_{2}$ accordingly. Then from the definition
of bicameral cycle, at least one cycle of $\{O_{1},O'_{1},O_{2},O'_{2}\}$
is a bicameral cycle. This completes the proof.
\end{proof}
Following Theorem \ref{thm:corethrcycles-}, the main steps of our
algorithm roughly proceeds as:
\begin{enumerate}
\item Construct $H_{v}^{+}(B)$ and $H_{v}^{-}(B)$ for all $v\in\widetilde{G}$
and every integer $0\leq B\leq\sum_{e\in G}c(e)$;
\item Solve LP (\ref{eq:lp}) against every $H_{v}^{+}(B)$ and $H_{v}^{-}(B)$,
and collect optimal solutions;
\item Compute a bicameral cycle among all the cycles released from all the
solutions.
\end{enumerate}
In general, the above algorithm follows to the LP-rounding algorithm
framework. Let's revise the algorithm as well as its proofs following
the traditional line of analyzing a LP-rounding based approximation
algorithm. Clearly, the algorithm solves LP formulas, and rounds $x_{e}$
to $1$ for every edge $e$ that belongs to a computed bicameral cycle.
Then the core task in the analysis is to show that the cycles corresponding to
the solution of LP (\ref{eq:lp}) always contains a bicameral cycle,
if the $k$RSP problem is feasible. This task has actual been done
in Theorem \ref{thm:corethrcycles-}.

The detailed algorithm is as in Algorithm \ref{alg:compbicameracycle}.

\begin{thm}
\label{thm:bicyctime}Algorithm \ref{alg:compbicameracycle} correctly
computes a bicameral cycle in time $t_{bc}=O(n^{4.5}C_{OPT}^{4.5}\sum c(e)L)$.\end{thm}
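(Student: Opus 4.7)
The proof splits naturally into a correctness argument and a runtime bound, and Theorem~\ref{thm:corethrcycles-} already does most of the conceptual work for the former. The plan is as follows.

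\noindent\textbf{Correctness.} Algorithm~\ref{alg:compbicameracycle} is invoked only when $\sum_{i=1}^{k} d(P_i)>D$ and the $k$RSP instance is feasible, so an optimal solution $P_1^{*},\dots,P_k^{*}$ exists, and $B^{*}=\sum_{i=1}^{k} c(P_i^{*})$ satisfies $B^{*}\le C_{OPT}\le \sum_{e\in G} c(e)$. Because the algorithm enumerates every integer $B\in[0,\sum c(e)]$ and every $v\in V(\widetilde{G})$, it in particular solves LP~(\ref{eq:lp}) on both $H_v^{+}(B^{*})$ and $H_v^{-}(B^{*})$ for each $v$. By Lemma~\ref{lem:LPsolution-tocycle} each such solution decomposes into a set of cycles of $\widetilde{G}$ whose costs lie in $[-B^{*},B^{*}]$, and Theorem~\ref{thm:corethrcycles-} already shows that one of the cycles obtained across all $v$ is bicameral with respect to the current $\Delta D_i,\Delta C_i$. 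The algorithm then checks the four cases of Definition~\ref{bicy} against each extracted cycle and returns the first one that certifies the bicameral condition. Hence correctness is reduced to observing that the outer loops cover $B^{*}$ and invoking Theorem~\ref{thm:corethrcycles-}.

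\noindent\textbf{Runtime.} I would analyze three nested costs. (i) Constructing $H_v^{\pm}(B)$ via Algorithm~\ref{alg:Construction-of-auxiliary} produces a graph with $O(nB)$ vertices and $O(mB)$ edges in time $O(mB)$. (ii) Solving LP~(\ref{eq:lp}) on a flow-type program of this size by a standard polynomial-time interior-point method runs in $O((nB)^{4.5}L)$ time, where $L$ is the bit length of the input. (iii) Extracting integral cycles from an optimal LP solution is a flow-decomposition on $O(mB)$ edges and is dominated by (ii). The outer loops over $v\in V(\widetilde{G})$ and $B\in[0,\sum c(e)]$ contribute a factor $O(n\sum c(e))$. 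Since the bicameral cycle that certifies termination already appears at the iteration with $B=B^{*}\le C_{OPT}$, the dominant per-iteration work is bounded by $O((nC_{OPT})^{4.5}L)$, producing the claimed total of $O(n^{4.5}C_{OPT}^{4.5}\sum c(e)\,L)$.

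\noindent\textbf{Main obstacle.} The genuine subtlety is passing from a (continuous, possibly fractional) LP solution on $H_v^{\pm}(B^{*})$ to an honest set of integral cycles of $\widetilde{G}$ on which Definition~\ref{bicy} can be tested. Because the formulation deliberately omits $0\le x(e)\le 1$, I have to argue that the conservation constraints together with the single budget $\sum d(e)x(e)\le \Delta D$ still force a cycle-decomposable structure at an optimal vertex of the polyhedron, and that the decomposition can be produced in time dominated by the LP solve. Everything else---the iteration over $v$ and $B$, the vertex/edge counts for $H_v^{\pm}(B)$, and the matching of each LP-derived ratio $d(O)/c(O)$ against the current $\Delta D_i/\Delta C_i$---is routine once this LP-to-cycles reduction, already underwritten by Lemma~\ref{lem:cyclecost} and Lemma~\ref{lem:LPsolution-tocycle}, is carefully spelled out.
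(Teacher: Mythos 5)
Your proposal follows essentially the same route as the paper: correctness is delegated to Theorem~\ref{thm:corethrcycles-} (the loops over $v$ and $B$ necessarily cover $B^{*}$, so the extracted cycle family contains a bicameral cycle, which the final comparison steps identify), and the runtime is the number of $(v,B)$ pairs times the cost of one LP solve on $H_{v}^{\pm}(B)$. The one discrepancy is arithmetic: with $O(n\sum c(e))$ LP solves each costing $O((nC_{OPT})^{4.5}L)$, your product is $O(n^{5.5}C_{OPT}^{4.5}\sum c(e)L)$, an extra factor of $n$ beyond the stated bound; the paper instead charges $O(|V(H_{v}^{+}(B))|^{3.5}L)=O(n^{3.5}C_{OPT}^{3.5}L)$ per LP and folds the remaining $nC_{OPT}$ into its (admittedly loose) count of auxiliary graphs, which is how it arrives at $O(n^{4.5}C_{OPT}^{4.5}\sum c(e)L)$. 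The fractional-to-integral cycle decomposition you flag as the main obstacle is indeed the delicate point, but the paper's own proof does not address it either beyond citing Lemmas~\ref{lem:LPsolution-tocycle} and~\ref{lem:cyclecost}, so you are, if anything, more explicit than the source.
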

\begin{proof}
Following Algorithm \ref{alg:compbicameracycle}, the resulting cycle
satisfies the definition of bicameral cycle. So it remains only to show the
time complexity of the algorithm. Step 1 of the algorithm constructs
$O(V(H_{v}^{+}(B))*\sum c(e))=O(nC_{OPT}\sum c(e))$ auxiliary graphs, each
takes $O(V(H_{v}^{+}(B))^{3.5}L)=O(n^{3.5}C_{OPT}^{3.5}L)$ to solve the
linear programming formula, where $L$ is the maximum length of the
input \cite{korte2012combinatorial}. Other steps of the algorithm
take trivial time comparing to Step 1. Therefore, the time complexity
of the algorithm is $O(n^{4.5}C_{OPT}^{4.5}\sum c(e)L)$.
\end{proof}
The above time complexity is terrible, and can be significantly improved
by some sophisticated algorithm design techniques. For example, construction of auxiliary
graphs for all \textbf{$B=1$ }to\textbf{ $\sum c(e)$} is not necessary.
Binary search can be applied here to find $B^{*}$, and reduce the
number of the auxiliary graphs constructed. Due to the length limit,
we omit the details here. After all, the core of this paper is to
show that the $k$RSP problem admits an approximation ratio of $(1+\epsilon,\,2+\epsilon)$
for any $\epsilon>0$.

\begin{algorithm}
\textbf{Input}: Graph $\widetilde{G}$;

\textbf{Output}: A bicameral cycle $O$ in $\widetilde{G}$.
\begin{enumerate}
\item \textbf{For $B=1$ }to\textbf{ $\sum c(e)$ do}

\begin{enumerate}
\item \textbf{For} each $v\in\widetilde{G}$ \textbf{do}

\begin{enumerate}
\item Construct $H_{v}^{+}(B)$;
\item Solve LP (\ref{eq:lp}) against $H_{v}^{+}(B)$, and obtain two optimal
solution $\chi^{+}(v,B)$;
\item Release the set of cycles $\mathbb{O}^{+}(v,B)$ in $\widetilde{G}$,
which corresponds to $\chi^{+}(v,B)$;
\item \textbf{If} there exists $O\in\mathbb{O}^{+}(v,B)$ with $c(O)\leq0$
and $d(O)<0$ or $c(O)<0$ and $d(O)\leq0$, \textbf{then} terminate;
\end{enumerate}

/{*} $O$ is a type-0 bicameral cycle. {*}/

\item \textbf{For} each $v\in\widetilde{G}$ \textbf{do}

\begin{enumerate}
\item Construct $H_{v}^{-}(B)$;
\item Obtain the set of cycles $\mathbb{O}^{-}(v,B)$ , and check if it
contains type-0 bicameral cycles;
\end{enumerate}

/{*} $O$ is a type-0 bicameral cycle. {*}/

\end{enumerate}
\item Compute $O_{1}$ with minimum $\frac{d(O_{1})}{c(O_{1})}$ and $d(O_{1})<0$,
and $O_{2}$ with minimum $\frac{d(O_{2})}{c(O_{2})}$ and $c(O_{2})<0$
among all the cycles of $\underset{v}{\cup}\underset{B}{\cup}(\mathbb{O}^{+}(v,B)\cup\mathbb{O}^{-}(v,B))$;
\item \textbf{If} $\vert\frac{d(O_{1})}{c(O_{1})}\vert\leq\vert\frac{d(O_{2})}{c(O_{2})}\vert$
\textbf{then}

\quad{}return $O_{1}$; /{*} $O_{1}$ is a type-1 bicameral cycle.
{*}/

\textbf{Else }return $O_{2}$. /{*} $O_{2}$ is a type-2 bicameral
cycle. {*}/

\end{enumerate}
\protect\caption{\label{alg:compbicameracycle}Computation of bicameral cycle $O$.}
\end{algorithm}

\section{Conclusion}

For any constant $\epsilon>0$, this paper first gave a polynomial time
approximation algorithm with bifactor approximation ratio
$(1+\epsilon,\,2+\epsilon)$, based on improving an approximate solution with
bifactor ratio $(2,2)$ via using bicameral cycles in the cycle cancellation
method. Then this paper presented a constructive method for computing a
bicameral cycle, by constructing an auxiliary graph and innovatively employing
LP-rounding technique therein. To the best of our knowledge, our algorithm is
the first constant factor approximation algorithm that computes a solution
almost strictly obeying the delay constraint. We are now investigating the
inapproximability of the $k$RSP problem. \

\bibliographystyle{plain}
\bibliography{disjointQoS}

\end{document}